\documentclass[11pt]{article}

\usepackage{fullpage}
\usepackage{natbib}
\usepackage{amsmath}
\usepackage{amssymb}
\usepackage{mathtools}
\usepackage[utf8]{inputenc} 
\usepackage[T1]{fontenc}    
\usepackage{hyperref}       
\usepackage{url}           
\usepackage{booktabs}       
\usepackage{algorithm}
\usepackage{amsfonts}     
\usepackage{nicefrac}      
\usepackage{microtype}      
\usepackage[dvipsnames]{xcolor}        
\usepackage{graphicx}
\usepackage{float}
\usepackage{tikz}
\usetikzlibrary{matrix,positioning,fit,calc,backgrounds}
\usepackage{bm}
\usepackage{bbm}
\usepackage{amsthm}
\usepackage{mathrsfs}
\usepackage{multirow}
\usepackage{booktabs}
\usepackage{longtable}
\usepackage{tabularx}
\usepackage{graphicx}
\usepackage{subcaption}
\usepackage{caption}
\usepackage{algorithm}
\usepackage{algpseudocode}
\usepackage{arydshln}
\usepackage{makecell}
\usepackage{threeparttable}
\usepackage{adjustbox}

\newcommand\blfootnote[1]{%
  \begingroup
  \renewcommand\thefootnote{}\footnote{#1}%
  \addtocounter{footnote}{-1}%
  \endgroup
}

\newcommand{\Var}{\operatorname{Var}}
\newcommand{\E}{\mathbb E}

\newtheorem{lemma}{Lemma}
\newtheorem{theorem}[lemma]{Theorem}

\newtheorem{definition}{Definition}
\newtheorem{assumption}{Assumption}

\theoremstyle{definition}
\newtheorem{example}{Example}

\title{Estimating Dynamic Marginal Policy Effects \\
under Sequential Unconfoundedness}

\author{
I-han Lai \\ Mgmt.\ Science \& Engineering \\ Stanford University \\ \texttt{ihan@stanford.edu}
\and
Stefan Wager \\ Graduate School of Business \\ Stanford University \\ \texttt{swager@stanford.edu}
}

\begin{document}

\maketitle

\begin{abstract}
We develop methods for estimating how infinitesimal policy changes affect long-term outcomes in dynamic systems. We show that dynamic marginal policy effects (MPEs) can be identified via tractable reduced-form expressions, and can be estimated under a general sequential unconfoundedness assumption. We also propose a doubly robust estimator for dynamic MPEs. Our approach does not require observing full dynamic state information (as is typically assumed for off-policy evaluation in Markov decision processes), and does not incur an exponential curse of horizon (as is typical in non-Markovian off-policy evaluation). We demonstrate practicality and robustness of our approach in a number of simulations, including one motivated by a dynamic pricing application where people use past prices to form a reference level for current prices.

\end{abstract}


\section{Introduction}

Dynamic decision-making problems,\blfootnote{\hspace{-7mm} This research was partially
funded by grant N00014-24-1-2091 from the Office of Naval Research.}
with actions taken over time in response to evolving information,
arise in many domains including online platforms, public policy, and personalized medicine
\citep{sutton1998reinforcement,tsiatis2019dynamic}. In such settings, there is often
interest in performing off-policy evaluation, i.e., in using data collected under a status-quo
policy to reason about how key outcomes (e.g., long-term revenue, welfare or health metrics)
might change if we used a different counterfactual policy to guide action. Dynamic off-policy
evaluation is, however, generally challenging: existing methods either require the analyst to
observe rich enough state information to be able to explicitly fit the dynamic system as
a Markov decision process (MDP), or suffer from statistical instability due to an exponential
blow-up in variance as the problem horizon gets large
\citep{dudik2014doubly,liao2022batch,jiang2016doubly,kallus2022efficiently,thomas2016data}.

In this paper, we argue that restricting our attention to counterfactual evaluation of policies that
are ``local perturbations'' to the status-quo policy enables analysts to meaningfully learn about
policy-relevant treatment effects while avoiding many of the difficulties involved in
standard off-policy evaluation. Specifically, we consider estimation of dynamic marginal
policy effects (MPEs) which, building on \citet{carneiro2010evaluating},
we define as the pathwise derivative of expected discounted welfare along a
smooth, user-specified perturbation of a baseline decision rule. Conceptually, the MPE compares
the value of the current policy to the value of a nearby policy obtained by infinitesimally
perturbing the action-assignment mechanism in a pre-specified direction; for example, the MPE can
be used to quantify the effect of slightly increasing probability of a given action for a particular
subgroup in a dynamic treatment regime, or of nudging a pricing rule upward under specific market
conditions. Our main finding is that the dynamic MPE can be identified via simple non-recursive
moment conditions, and allows for doubly robust
estimation under sequential unconfoundedness using algorithms whose complexity is comparable to
the augmented inverse-propensity weighted algorithm of \citet{robins1994estimation} and its
generalizations \citep{chernozhukov2022locally}. We also demonstrate practical
feasibility of our approach in a number of simulation experiments, including a design
motivated by optimal pricing in a dynamic setting where consumers use past prices as
reference points when reacting to current prices.

The promise of focusing on MPEs and related infinitesimal treatment effects has been widely recognized
across a number of application areas. \citet{heckman2005structural} and \citet{carneiro2010evaluating}
argue that, in models with endogenous selection into treatment, MPEs present a realistic
yet still policy-relevant target for inference. \citet{kennedy2019nonparametric} advocates
for evaluating interventions that modify propensity scores (as opposed to ones that fix treatment
paths) as a pragmatic approach to dynamic treatment rules. \citet{munro2025treatment} consider
causal inference in marketplaces under supply-demand equilibrium, and find that various MPEs can be
estimated---and used for welfare-improving targeting---using simple non-parametric approaches.
And \citet{wager2021experimenting} argue that, beyond being of inherent interest, MPEs can be
used to efficiently improve overall system performance via gradient-based algorithms.

Relative to this line of work, our main contribution is to develop general results for
estimating MPEs that apply to dynamic problems. Most existing work on MPEs
focuses on single time period problems with difficulties arising from endogenous selection
and/or equilibrium formation. Estimating dynamic MPEs poses new challenges because the
effects of any policy change propagate through time and so, e.g., a decision taken today
not only affects the outcome observed today but also shapes the distribution of
covariates and actions in subsequent time periods. This phenomenon is closely related
to carryover effects \citep{bojinov2023design,hu2022switchback} and temporal
interference \citep{farias2022markovian,glynn2020adaptive}.

The starting point for our analysis is a reduced-form representation
of dynamic MPEs using a generalization of the widely used policy-gradient theorem from
reinforcement learning \citep[e.g.,][]{marbach2001simulation,NIPS1999_464d828b}.
This reduced-form representation then
allows us to approach estimation and inference via the general doubly robust / double
machine learning framework \citep{chernozhukov2022locally}.
The policy-gradient theorem is widely used for optimizing reinforcement learning systems
via gradient-based algorithms. However, the use of policy-gradient methods in observational
study causal inference is as-of-now still an emerging area; two contributions we are aware
of are the work of \citet{johari2025estimation} on policy-gradient estimation in Bernoulli-randomized experiments, and \citet{ghosh2025non} who study
dynamic regression-discontinuity designs.

\subsection{Related Work}
\label{sec:related_work}

The statistical literature on causal inference in longitudinal settings goes back to Robins' work on the $g$-formula and related $g$-methods, including marginal structural models (MSMs) and structural nested models, which address time-varying confounding and provide identification and estimation strategies for counterfactual trajectories under sequential interventions \citep{robins1986new, robins1997causal}. Under sequential unconfoundedness (or sequential ignorability), these frameworks imply a factorization of counterfactual distributions that enables estimation via iterated regression (g-computation), inverse probability weighting, or their augmentations \citep{hernan2020causal, tsiatis2019dynamic}.

A closely related literature studies dynamic treatment regimes (DTRs), i.e., policies mapping evolving histories to actions, motivated by multi-stage clinical decision-making. \citet{murphy2003optimal} formalizes optimal DTRs, with widely used estimation approaches including Q-learning and A-learning \citep{chakraborty2014dynamic}. Subsequent work leverages modern machine learning and high-dimensional statistics for regime evaluation and learning, including settings with complex longitudinal structure \citep{ertefaie2018constructing, bodory2022evaluating, bradic2024high}. Most of this literature focuses on the global value of a policy class or on policy optimization, whereas our target is a local, pathwise object: the marginal change in welfare induced by an infinitesimal, coordinated perturbation of a baseline policy.

From the reinforcement learning (RL) perspective, off-policy evaluation (OPE) provides tools to estimate the value of a target policy using data generated by a different behavior policy. Importance-sampling estimators are unbiased but can suffer from high variance, especially over long horizons; doubly robust estimators combine outcome modeling with weighting to reduce variance while retaining robustness \citep{dudik2014doubly, jiang2016doubly, thomas2016data}. Recent work also develops semiparametric and efficiency-oriented analyses of OPE in Markov decision processes \citep{kallus2022efficiently, liao2022batch, mehrabi2024off}. As discussed above, unrestricted OPE methods generally
suffer from an exponential curse of dimension in terms of the horizon while explicit fitting
of Markov decision processes is only possible when an analyst can observe all relevant state
variables; the MPE-focused method introduced here avoids both of these challenges by restricting
its attention to counterfactual evaluation of policies within a small neighborhood of the status
quo.

Our estimation strategy is also informed by the broader semiparametric and debiasing literature for complex functionals with high-dimensional nuisance components. The organizing principle is to construct Neyman-orthogonal (locally robust) scores so that first-order nuisance errors cancel, enabling valid inference under flexible nuisance learning \citep{chernozhukov2018double}. Automatic Debiased Machine Learning (Auto-DML) characterizes the correction term via a Riesz representer and proposes practical estimation strategies based on variational formulations \citep{chernozhukov2022automatic}. Related balancing and minimax linear weighting approaches target the same robustness goal through finite-sample control of imbalance \citep{athey2018approximate, hirshberg2021augmented}. The present paper extends this debiasing logic to local perturbations in dynamic treatment regimes.

Several literatures study welfare changes induced by policy shifts that are smaller or more realistic than a full switch to a counterfactual regime. In econometrics, the marginal treatment effect framework and related policy parameters characterize the welfare consequences of marginal changes in participation incentives or assignment rules, emphasizing the dependence of effects on the direction of the policy change \citep{heckman2005structural, carneiro2010evaluating, sasaki2023prte}. These contributions are conceptually aligned with our focus on local policy perturbations, but they primarily address static or single-index selection settings. Our estimand instead concerns a sequential decision problem and uses a policy-gradient representation to express the marginal effect as a linear functional of stage-wise $q$-functions.

In biostatistics and causal inference, a growing body of work targets effects of feasible interventions that modify the treatment mechanism rather than deterministically setting treatment values. This includes modified/shift interventions for continuous or multivalued treatments \citep{haneuse2013estimation}, incremental propensity score interventions \citep{kennedy2019nonparametric, naimi2021incremental}, and recent longitudinal extensions tailored to resource constraints and time-varying treatment availability \citep{diaz2023lmtp, sarvet2023longitudinal}. These approaches share the motivation of defining policy-relevant contrasts that mitigate positivity challenges and better reflect implementable changes. Our work complements this line by focusing on an infinitesimal, coordinated perturbation of a baseline policy over the entire horizon; the resulting marginal policy effect admits a policy-gradient form that directly motivates sequential correction via Riesz representers.

Finally, there is a long tradition of using policy-gradient methods to sequentially
optimize parameters within a reinforcement learning system
\citep{williams1992simple, NIPS1999_464d828b, NIPS2001_4b86abe4}.
However, this paper is one of just a handful of recent investigations around the use of
policy-gradient methods for observational study causal inference. Others include
\citet{johari2025estimation} who consider estimation of policy gradients from sequentially
Bernoulli-randomized trials; and \citet{ghosh2025non} who study local welfare sensitivity for
structured policy classes such as threshold rules.

\section{Dynamic Marginal Policy Effects}
\label{sec:setup}

We study a finite-horizon sequential decision problem with horizon $T\in\mathbb{N}$, observed over $n$ independent and identically distributed units. For unit $i$, the observed trajectory is
\[
Z_i \;=\; (X_{i,1},A_{i,1},Y_{i,1},\,X_{i,2},A_{i,2},Y_{i,2},\,\ldots,\,X_{i,T},A_{i,T},Y_{i,T},X_{i,T+1}),
\]
where $X_{i,t}\in\mathcal X_t$ denotes covariates observed at the beginning of period $t$, $A_{i,t}\in\mathcal A_t$ is the action taken, and $Y_{i,t}\in\mathbb{R}$ is the accrued outcome. For notational simplicity, we suppress the unit index and work with a generic trajectory.
It is convenient to collect the information observed before action $A_t$ is chosen into the history
\[
S_t \;:=\; (X_1,A_1,Y_1,\ldots,X_{t-1},A_{t-1},Y_{t-1},X_t)\in\mathcal S_t,
\]
so that decisions may depend on the entire past. We do not impose a Markov restriction: the conditional law of $(X_{t+1},Y_t)$ given $(S_t,A_t)$ may depend on the full history in an arbitrary manner.

A (possibly stochastic) policy is a collection $\pi=\{\pi_t\}_{t=1}^{T}$ of conditional action distributions. Throughout, we focus on history-dependent policies indexed by the current history,
\begin{equation}
A_t \sim \pi_t(\cdot\mid S_t), \qquad t=1,\ldots,T.
\label{eq:policy_def}
\end{equation}
We fix a single $\sigma$-finite reference measure $\lambda$ on the action space $\mathcal A$ such that each $\mathcal A_t$ is a measurable subset of $\mathcal A$ and, for almost every history $s$, the conditional distribution $\pi_t(\cdot\mid s)$ admits a density (or mass function) with respect to $\lambda$ restricted to $\mathcal A_t$. 

For a discount factor $\gamma\in(0,1]$, define the discounted cumulative outcome
from period $t$ onward as
\begin{equation}
\Gamma_t
:=
\sum_{k=t}^T \gamma^{k-t}Y_k.
\label{eq:defY}
\end{equation}
Under any policy $\pi$, define the action-value and value functions
\begin{align}
q_t(s,a)
&:=
\E_\pi\!\left[
\Gamma_t
\,\middle|\,
S_t=s,\, A_t=a
\right],
\label{eq:def_q}
\\
V_t(s)
&:=
\int_{\mathcal A_t} q_t(s,a)\pi_t(a\mid s)\,d\lambda(a),
\qquad
V_{T+1}\equiv0.
\label{eq:def_V}
\end{align}
We now formalize the longitudinal causal model and the support conditions under which policy interventions are represented via the $g$-formula. The following two assumptions characterize the status-quo data collection distribution.

\begin{assumption}[Consistency]
\label{assump:consistency}
For each action history $a_{1:t}=(a_1,\ldots,a_t)$ there exist potential outcomes
\(
(X_{t+1}(a_{1:t}),Y_t(a_{1:t}))
\)
such that, almost surely,
\[
(X_{t+1},Y_t) = (X_{t+1}(A_{1:t}),Y_t(A_{1:t})), \qquad t=1,\ldots,T.
\]
\end{assumption}

\begin{assumption}[Sequential unconfoundedness]
\label{assump:SU}
For each $t=1,\ldots,T$ and any future action sequence $(a_t,\ldots,a_T)$,
\[
A_t \ \perp\!\!\!\perp\ \Big\{
X_{k+1}(A_{1:t-1},a_t,\ldots,a_k),\,
Y_k(A_{1:t-1},a_t,\ldots,a_k)
:\ k=t,\ldots,T
\Big\} \ \Big|\ S_t.
\]
Equivalently, conditional on $S_t$, the realized action $A_t$ is independent of all future potential outcomes that are consistent with the realized past $A_{1:t-1}$.
\end{assumption}

Writing $P(x_1)$ for the marginal law of $X_1$ under the observed data-collection distribution
and $P(x_{t+1}, y_t\mid s_t,a_t)$ for the corresponding conditional law of $(X_{t+1},Y_t)$ given
$(S_t,A_t)=(s_t,a_t)$ for each $t=1,\ldots,T$, the status quo data-generating distribution factors as
\begin{equation}
P(x_{1:T+1},a_{1:T},y_{1:T})
=
P(x_1)\prod_{t=1}^T P(a_t\mid s_t)\,P(x_{t+1},y_t\mid s_t,a_t),
\end{equation}
where $s_t=(x_1,a_1,y_1,\ldots,x_{t-1},a_{t-1},y_{t-1},x_t)$ and
$P(a \mid s)$ is interpreted as the status-quo policy.
Our main interest is in how the outcome distribution would change if we were to
use any counterfactual policy $\pi=\{\pi_t\}_{t=1}^T$ for choosing actions. Under Assumptions~\ref{assump:consistency} and \ref{assump:SU},
trajectories generated using any policy $\pi$ whose conditional
action distribution is absolutely continuous with respect to the status-quo policy
satisfy the $g$-formula of \citet{robins1986new}, i.e.,
the induced counterfactual joint law on $(X_{1:T+1},A_{1:T},Y_{1:T})$ is
\citep[see, e.g.,][Proposition 14.1]{wager2024causal}
\begin{equation}
\label{eq:gformula}
P^{\pi}(x_{1:T+1},a_{1:T},y_{1:T})
=
P(x_1)\prod_{t=1}^T \pi_t(a_t\mid s_t)\, P(x_{t+1},y_t\mid s_t,a_t).
\end{equation}
In particular, only the action assignment rule varies across policies; the law of $X_1$ and the transition kernel $P(\cdot\mid S_t,A_t)$ are shared across all supported policies.

With this modeling structure in place, we are now ready to construct a class of local policy interventions that motivate our estimand. In many practical settings---such as a platform applying a price markup, a clinician adjusting a dosage, or an algorithm injecting exploration noise---interventions are most naturally defined as direct modifications to status-quo decisions.
We formalize this insight by considering policy interventions along ``convolutional perturbation paths'' defined via continuous-time Markov processes.

At stage $t$ and given history $s$, let $a_0\sim \pi_t(\cdot\mid s)$ denote the action drawn under the status-quo or baseline policy. A convolutional perturbation obtains new actions by running, for a short amount of time, a continuous-time Markov process initialized at the status-quo action. This perturbation is encoded by a Markov kernel $M_{t,\varepsilon}(a\mid a_0,s)$, written in density/mass-function form with respect to $d\lambda(a)$, where $\varepsilon\ge0$ measures the size of the local modification.

\begin{definition}[Convolutional perturbation path]
\label{def:operator_generated}
Fix a baseline policy $\pi=\{\pi_t\}_{t=1}^T$. A family of policies
$\{\pi_\varepsilon\}_{0\le \varepsilon<\bar\varepsilon}$ is a convolutional
perturbation path around $\pi$ if, for each stage $t=1,\ldots,T$, there exists
a known family of Markov kernels
\[
M_{t,\varepsilon}(a\mid a_0,s),
\qquad
0\le \varepsilon<\bar\varepsilon,
\]
mapping baseline actions $a_0\in\mathcal A_t$ to perturbed actions
$a\in\mathcal A_t$, such that the following conditions hold.

\begin{enumerate}
\item[(i)] \textbf{Kernel-generated policy.}
The perturbed policy is the marginal distribution obtained by drawing
$a_0\sim\pi_t(\cdot\mid s)$ and then applying the kernel $M_{t,\varepsilon}(\cdot\mid a_0,s)$:
\begin{equation}
\pi_{t,\varepsilon}(a\mid s)
=
\int_{\mathcal A_t}
M_{t,\varepsilon}(a\mid a_0,s)\,\pi_t(a_0\mid s)\,d\lambda(a_0).
\label{eq:operator_family}
\end{equation}

\item[(ii)] \textbf{Baseline recovery.}
At $\varepsilon=0$, the kernel leaves the baseline action unchanged:
\[
M_{t,0}(a\mid a_0,s)=\delta_{a_0}(a).
\]
Consequently, $\pi_{t,0}(a\mid s)=\pi_t(a\mid s)$.

\item[(iii)] \textbf{No off-support perturbation.}
For almost every $s$ and all sufficiently small $\varepsilon>0$, the perturbed policy is absolutely continuous with respect to the baseline policy:
\[
\pi_t(a\mid s)=0
\quad\Longrightarrow\quad
\pi_{t,\varepsilon}(a\mid s)=0
\quad\text{for $\lambda$-almost every }a\in\mathcal A_t.
\]

\item[(iv)] \textbf{First-order operators.}
The path admits a forward operator $L_t^*$ acting on the baseline policy density:
\[
(L_t^*\pi_t)(a\mid s)
:=
\lim_{\varepsilon\downarrow0}
\frac{\pi_{t,\varepsilon}(a\mid s)-\pi_t(a\mid s)}{\varepsilon}.
\]
The same kernel defines a backward operator $L_t$ acting on a function
class $\mathcal Q_t$ containing the baseline action-value function $q_t$:
\[
(L_t q)(s,a_0)
:=
\lim_{\varepsilon\downarrow0}
\frac{
\int_{\mathcal A_t} q(s,a)\,M_{t,\varepsilon}(a\mid a_0,s)\,d\lambda(a)
-
q(s,a_0)
}{\varepsilon},
\qquad q\in\mathcal Q_t.
\]
\end{enumerate}
\end{definition}

Our parameter of interest is the local change in expected discounted welfare at the baseline policy when we move along a convolutional perturbation path. For a given path, let $\E_\varepsilon[\cdot]$ denote the expectation under the joint law induced by $\pi_\varepsilon$ through the $g$-formula. The marginal policy effect (MPE) is defined as
\begin{equation}
\Theta
:=
\left.
\frac{d}{d\varepsilon}
\E_\varepsilon\!\left[
\sum_{k=1}^T \gamma^{k-1}Y_k
\right]
\right|_{\varepsilon=0}.
\label{eq:estimand_def_main}
\end{equation}
The following result gives a general identifying result for MPEs along convolutional
perturbation paths. 
To state the result, it is helpful to define the Riesz-representer-like weight function
\[
H_t(s,a)
:=
\frac{(L_t^*\pi_t)(a\mid s)}{\pi_t(a\mid s)}
\qquad
\text{on }\{\pi_t(a\mid s)>0\}.
\]
We also write
\begin{equation}
\begin{split}
&\langle L_t^*\pi_t,q\rangle(s)
:=
\int_{\mathcal A_t}
q(s,a)(L_t^*\pi_t)(a\mid s)\,d\lambda(a), \\
&\langle L_t q,\pi_t\rangle(s)
:=
\int_{\mathcal A_t}
(L_t q)(s,a)\pi_t(a\mid s)\,d\lambda(a).
\end{split}
\end{equation}
for functions $q$ in the domain of $L_t$.

\begin{theorem}[Convolutional policy-gradient theorem]
\label{thm:gPGT}
Suppose Assumptions~\ref{assump:consistency}--\ref{assump:SU} hold, and let $\{\pi_\varepsilon\}_{0\le\varepsilon<\bar\varepsilon}$ be a convolutional perturbation path in the sense of Definition~\ref{def:operator_generated}. Assume further that the following first-order regularity conditions hold for each stage $t=1,\ldots,T$:
\begin{enumerate}
\item[(i)] \textbf{Moment conditions.}
The discounted cumulative outcome is integrable,
\[
\E[|\Gamma_t|]<\infty.
\]
The $H_t$-weighted cumulative outcome is integrable,
\[
\E\!\left[
\left|
H_t(S_t,A_t)\Gamma_t
\right|
\right]<\infty.
\]
Moreover, the forward and backward operator expressions are absolutely integrable:
\begin{align*}
\E\!\left[
\int_{\mathcal A_t}
\left|
q_t(S_t,a)(L_t^*\pi_t)(a\mid S_t)
\right|
\,d\lambda(a)
\right]
<\infty, \ \
\E\!\left[
\int_{\mathcal A_t}
\left|
(L_t q_t)(S_t,a)
\right|
\pi_t(a\mid S_t)\,d\lambda(a)
\right]
<\infty.
\end{align*}
\item[(ii)] \textbf{Dominated convergence for the local path.}
For all sufficiently small $\varepsilon>0$, the policy-side and kernel-side finite differences admit integrable envelopes. Specifically, there exist an integrable random variable $D_t$ and a measurable envelope $D_t(s,a_0)$ such that
\begin{align*}
\int_{\mathcal A_t}
\left|
q_t(S_t,a)
\frac{\pi_{t,\varepsilon}(a\mid S_t)-\pi_t(a\mid S_t)}
{\varepsilon}
\right|
\,d\lambda(a)
\le D_t,\\
\left|
\frac{
\int_{\mathcal A_t}
q_t(s,a)M_{t,\varepsilon}(a\mid a_0,s)\,d\lambda(a)
-
q_t(s,a_0)
}{\varepsilon}
\right|
\le
\widetilde D_t(s,a_0), \\
\E\!\left[
\int_{\mathcal A_t}
\widetilde D_t(S_t,a_0)\pi_t(a_0\mid S_t)\,d\lambda(a_0)
\right]
<\infty.
\end{align*}
\end{enumerate}
Then the marginal policy effect in \eqref{eq:estimand_def_main} exists and satisfies both a forward representation
\begin{equation}
\Theta
=
\sum_{t=1}^T
\gamma^{t-1}
\E\!\left[
\langle L_t^*\pi_t,q_t\rangle(S_t)
\right]
=\sum_{t=1}^T
\gamma^{t-1}
\E\!\left[
H_t(S_t,A_t)\Gamma_t
\right],
\label{eq:gPGT}
\end{equation}
as well as a backward representation
\begin{equation}
\Theta
=
\sum_{t=1}^T
\gamma^{t-1}
\E\!\left[
\langle L_t q_t,\pi_t\rangle(S_t)
\right]
=
\sum_{t=1}^T
\gamma^{t-1}
\E\!\left[
(L_t q_t)(S_t, A_t)
\right].
\label{eq:gPGT2}
\end{equation}
\end{theorem}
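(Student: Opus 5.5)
The proof runs through a telescoping (performance-difference) decomposition based on the $g$-formula \eqref{eq:gformula}, followed by a limit $\varepsilon\downarrow0$ justified by the envelopes in condition (ii). Write $V^\varepsilon_1$ for the value under $\pi_\varepsilon$, and $q_t,V_t$ for the baseline quantities \eqref{eq:def_q}--\eqref{eq:def_V}. Since the transition kernels are shared across policies, the baseline $q_t$ satisfies the Bellman recursion $q_t(s,a)=\E[Y_t+\gamma V_{t+1}(S_{t+1})\mid S_t=s,A_t=a]$ under every $P^{\pi_\varepsilon}$.

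\textbf{Step 1: exact difference.} Starting from $\E_\varepsilon[\sum_k\gamma^{k-1}Y_k]-\E_0[\sum_k\gamma^{k-1}Y_k]$, I add and subtract $\gamma^{t-1}V_t(S_t)$ along trajectories drawn from $P^{\pi_\varepsilon}$. This gives the standard identity
\[
\E_\varepsilon\Big[\sum_k\gamma^{k-1}Y_k\Big]-\E\Big[\sum_k\gamma^{k-1}Y_k\Big]=\sum_{t=1}^T\gamma^{t-1}\E_\varepsilon\Big[\int_{\mathcal A_t}q_t(S_t,a)\{\pi_{t,\varepsilon}-\pi_t\}(a\mid S_t)\,d\lambda(a)\Big].
\]
The identity holds because $\E_\varepsilon[Y_t+\gamma V_{t+1}(S_{t+1})\mid S_t]=\int q_t\,\pi_{t,\varepsilon}\,d\lambda$, and the $V_t$ terms telescope. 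Absolute continuity (Definition~\ref{def:operator_generated}(iii)) ensures that $P^{\pi_\varepsilon}$ is supported where $q_t$ is defined. The integrability in (i) makes each term finite.

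\textbf{Step 2: divide by $\varepsilon$ and pass to the limit.} The stage-$t$ summand is an expectation over $S_t$ under $P^{\pi_\varepsilon}$, not under the baseline. I therefore rewrite it as the baseline expectation plus a remainder. This remainder is the main obstacle. Its size is $\E_\varepsilon[f_\varepsilon(S_t)]-\E[f_\varepsilon(S_t)]$, where $|f_\varepsilon|\le D_t$ after division by $\varepsilon$. The law of $S_t$ under $\pi_\varepsilon$ differs from the baseline law by the factor $\prod_{u<t}\pi_{u,\varepsilon}/\pi_u$, which tends to $1$ pointwise by (iv). To conclude that the remainder is $o(1)$, I plan to either (a) use a second application of dominated convergence with the envelope $D_t$, or (b) reorder the telescoping so that stage $t$ is evaluated under a mixed policy, with $\pi$ before $t$ and $\pi_\varepsilon$ after $t$. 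I prefer (b): the mixed ordering leaves the baseline law of $S_t$ untouched and moves the $\varepsilon$-dependence into $q^{\varepsilon}_t$, the $q$-function of $\pi_\varepsilon$ from $t$ onward. Under (b) I then need $q^\varepsilon_t\to q_t$, which I would obtain by backward induction using the same envelopes. Once the remainder is controlled, dominated convergence with $D_t$ and (iv) gives
\[
\Theta=\sum_t\gamma^{t-1}\E[\langle L_t^*\pi_t,q_t\rangle(S_t)].
\]
Existence of the derivative comes out of this limit.

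\textbf{Step 3: the alternative forms.} For the second expression in \eqref{eq:gPGT}, I write $L_t^*\pi_t=H_t\pi_t$ on the support; by (iii), $L_t^*\pi_t$ vanishes off the support. This gives $\E[\langle L_t^*\pi_t,q_t\rangle(S_t)]=\E[H_t(S_t,A_t)q_t(S_t,A_t)]$. Replacing $q_t(S_t,A_t)$ by $\Gamma_t$ is then the tower property, using sequential unconfoundedness through \eqref{eq:gformula} together with $\E|H_t\Gamma_t|<\infty$.

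For \eqref{eq:gPGT2}, I use \eqref{eq:operator_family} and Fubini to write
\[
\int q_t\,\pi_{t,\varepsilon}\,d\lambda-\int q_t\,\pi_t\,d\lambda=\int\Big[\int q_t(s,a)M_{t,\varepsilon}(a\mid a_0,s)\,d\lambda(a)-q_t(s,a_0)\Big]\pi_t(a_0\mid s)\,d\lambda(a_0).
\]
Dividing by $\varepsilon$ and applying dominated convergence with $\widetilde D_t$ yields $\langle L_tq_t,\pi_t\rangle=\langle L_t^*\pi_t,q_t\rangle$ pointwise in $s$. Finally, $\E[\langle L_tq_t,\pi_t\rangle(S_t)]=\E[(L_tq_t)(S_t,A_t)]$ because $A_t\sim\pi_t(\cdot\mid S_t)$.
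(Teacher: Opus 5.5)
Your Steps~1 and~3 are essentially the paper's proof: the performance-difference identity under $P_\varepsilon$ with the baseline $q_t$, then $L_t^*\pi_t=H_t\pi_t$ plus the tower property, then the adjoint identity via \eqref{eq:operator_family}, Fubini and $\widetilde D_t$. The problem is Step~2, and you have found exactly the point the paper passes over. Writing $R_{t,\varepsilon}$ for your $f_\varepsilon$, the paper simply asserts that ``the local dominated-convergence condition \ldots\ implies $\E_\varepsilon[R_{t,\varepsilon}(S_t)]\to\E[R_t(S_t)]$.'' Neither of your fixes closes this, and nothing can under the stated hypotheses. All integrability in (i)--(ii), including that of $D_t$, is under the baseline law, while the Step~1 terms are $P_\varepsilon$-expectations. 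So your remark that (i) makes each term finite is also unjustified.

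In (a), the remainder is $\E[(w_{t,\varepsilon}-1)R_{t,\varepsilon}(S_t)]$ with $w_{t,\varepsilon}=\prod_{u<t}\pi_{u,\varepsilon}(A_u\mid S_u)/\pi_u(A_u\mid S_u)$. Dominating it requires control of $\E[w_{t,\varepsilon}D_t]=\E_\varepsilon[D_t]$, which nothing supplies. In (b), the cross term $\int(q_t^\varepsilon-q_t)(\pi_{t,\varepsilon}-\pi_t)\,d\lambda/\varepsilon$ is not controlled by $D_t$, because $D_t$ is weighted by $|q_t|$. Any backward-induction bound on $q_t^\varepsilon-q_t$ is itself a $P_\varepsilon$-conditional expectation of later envelopes, so it is the same problem.

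Concretely, take $T=2$, trivial covariates, $A_1,A_2$ i.i.d.\ $\mathcal N(0,1)$ under the shift of Example~\ref{ex:location_shift}, $Y_1=0$, and $Y_2=g(A_1)A_2$ with $g(a)=e^{a^2/2}(1+a^2)^{-2}$. Then $H_t(s,a)=a$, $q_1\equiv0$ and $q_2(s,a)=g(a_1)a$. Conditions (i)--(ii) hold with $D_1=\widetilde D_1=0$, $D_2=Cg(A_1)$ and $\widetilde D_2(s,a_0)=g(a_1)$, since $\E[g(A_1)]$ and $\E[|A_1|g(A_1)]$ are finite. Yet $\E_\varepsilon[g(A_1)]\propto\int e^{\varepsilon a}(1+a^2)^{-2}\,da=\infty$ for every $\varepsilon>0$. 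So $\E_\varepsilon[Y_2]$ has the form $\infty-\infty$ and $\Theta$ does not exist, while \eqref{eq:gPGT} returns $\gamma\E[g(A_1)]<\infty$. In your terms, the remainder in (a) is $\E_\varepsilon[g(A_1)]-\E[g(A_1)]=\infty$. In (b), $q_1^\varepsilon=\gamma\varepsilon g\to q_1$ pointwise, but the stage-1 term is $\gamma\{\E_\varepsilon[g(A_1)]-\E[g(A_1)]\}=\infty$.

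You need an extra hypothesis transferring integrability from $P_0$ to $P_\varepsilon$. Two options are uniformly bounded likelihood ratios, or bounded outcomes together with $\lim_{M\to\infty}\sup_{0<\varepsilon<\varepsilon_0}\E_\varepsilon[D_t\mathbf 1\{D_t>M\}]=0$. Under either, option (a) works, because $\E[|w_{t,\varepsilon}-1|D_t]\le M\,\E|w_{t,\varepsilon}-1|+\E[D_t\mathbf 1\{D_t>M\}]+\E_\varepsilon[D_t\mathbf 1\{D_t>M\}]$. The first term vanishes by Scheff\'e's lemma, since $w_{t,\varepsilon}\ge0$, $\E[w_{t,\varepsilon}]=1$ by (iii), and $w_{t,\varepsilon}\to1$ a.s.\ by (iv).

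There is a second, independent gap in ``dominated convergence with $D_t$ and (iv) gives'' the forward formula. $D_t$ bounds $\int|q_t|\,|\pi_{t,\varepsilon}-\pi_t|/\varepsilon\,d\lambda$, not the integrand pointwise in $a$. It therefore lets you pass the limit through $\E[\cdot]$ but not through $\int\cdot\,d\lambda(a)$. Your Step~3 adjoint identity needs the same interchange, and the paper's ``by the definition of $L_t^*$'' step has the same hole. Only the kernel side comes with a pointwise envelope, $\widetilde D_t(s,a_0)$.

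Definition~\ref{def:operator_generated} imposes no semigroup structure, so take $T=1$, $A\sim\mathcal N(0,1)$, $Y=\tanh(A)$ and $M_\varepsilon(\cdot\mid a_0)=(1-\varepsilon)\delta_{a_0}+\varepsilon\,\mathcal N(1/\varepsilon,1)$. Then $L^*\pi=-\pi$, $H\equiv-1$ and $Lq=1-q$, and (i)--(ii) hold with bounded envelopes. But $\Theta=\lim_{\varepsilon\downarrow0}\E[\tanh(1/\varepsilon+\xi)]=1=\E[(Lq)(A)]$ with $\xi\sim\mathcal N(0,1)$, whereas $\E[HY]=\E[\langle L^*\pi,q\rangle]=0$. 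So once the change of measure is repaired, the kernel-side envelope delivers \eqref{eq:gPGT2}. The forward representation \eqref{eq:gPGT} additionally needs a pointwise-in-$a$ integrable envelope for $q_t(s,a)\{\pi_{t,\varepsilon}-\pi_t\}(a\mid s)/\varepsilon$.
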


Given access to samples under the status quo policy, evaluating $\Theta$ via \eqref{eq:gPGT}
further requires knowledge of the behavior policy $\pi_t(a\mid s)$, or equivalently the weights $H_t(s,a)$;
whereas the second representation \eqref{eq:gPGT2} requires $q_t(s, a)$. For purposes of estimation,
these dual representations of $\Theta$ will enable construction of a doubly robust moment condition
following \citet{chernozhukov2022locally}, thus enabling robust estimation of $\Theta$ via the double
machine learning paradigm.

Before proceeding with estimation results, we end this section by instantiating the abstract
identification results given above in the context of some examples of notable types
of action-perturbations one may want to evaluate.

\begin{example}[\textbf{Action swaps}]
\label{ex:benchmark_policy}
The standard policy-gradient theorem quantifies the welfare effects of infinitesimally
changing action probabilities over a discrete action set $a \in \mathcal A_t$.
Our convolutional policy-gradient theorem recovers this result using Markov jump processes.
One can construct a perturbation path as in Definition \ref{def:operator_generated} using
a process that, at random times drawn from a Poisson process with rate 1, jumps from
$a$ to a new action $a'$ with probability $K_t(a' \mid a,s)$.
The backward operator is \citep[Chapter 20]{levin2017markov}
\[
(L_t q)(s,a)
=
\sum_{a' \in \mathcal A_t}
\left(q(s,a')-q(s,a)\right)K_t(a'\mid a,s).
\]
Therefore, the MPE is
\begin{align*}
\Theta
&=
\sum_{t=1}^T
\gamma^{t-1}
\E\!\left[
\sum_{a \in \mathcal A_t}
\left(q_t(S_t,a)-q_t(S_t,A_t)\right)
K_t(a\mid A_t,S_t)
\right] \\
&= \sum_{t=1}^T
\gamma^{t-1}
\E\!\left[
\sum_{a \in \mathcal A_t}
q_t(S_t,a) \left(k_t(a\mid S_t) - \pi_t(a\mid S_t)\right)\right], \ \
k_t(a\mid s) := \sum_{a' \in \mathcal A_t} K_t(a\mid a',s) \, \pi_t(a'\mid s),
\end{align*}
where the first equality is by Theorem~\ref{thm:gPGT} and the second by rearranging
the expectation. And, since derivative of the conditional action probabilities along
this perturbation path is $k_t(a\mid S_t) - \pi_t(a\mid S_t)$, one recognizes
the above result as equivalent to the standard policy-gradient theorem as given in, e.g.,
Theorem 1 of \citet{NIPS1999_464d828b}.
\end{example}

\begin{example}[\textbf{Deterministic shift}]
\label{ex:location_shift}
Now consider a case with continuous-valued actions $\mathcal A_t\subseteq\mathbb R$,
and we are uniformly interested in shifting the actions up or down. We can accomplish
this formally using Definition \ref{def:operator_generated} and a deterministic location-shift kernel
\[
M_{t,\varepsilon}(a\mid a_0,s)
=
\delta_{a_0+\varepsilon}(a).
\]
For smooth $q$, the backward operator is
\[
(L_t q)(s,a)
=
\partial_a q(s,a).
\]
Therefore, by Theorem~\ref{thm:gPGT}, the MPE for uniformly shifting actions upwards is
\[
\Theta
=
\sum_{t=1}^T
\gamma^{t-1}
\E\!\left[
\partial_a q_t(S_t,A_t)
\right].
\]
Thus the MPE is the discounted sum of expected local action-gradients of the continuation value.
\end{example}

\begin{example}[\textbf{Adding noise}]
\label{ex:diffusion}
Again with continuous-valued actions $\mathcal A_t\subseteq\mathbb R^d$,
we might also be interested in the welfare effects of increasing uncertainty
of actions. Formally, we accomplish this by setting up a perturbation path as in
Definition \ref{def:operator_generated} with a diffusion process, i.e., with
$M_{t,\varepsilon}(a'\mid a,s)$ representing the transition density at time $\varepsilon$ of the diffusion
initialized at $a$:
\[
dA_u
=
\Sigma_t(s,A_u)^{1/2}\,dW_u,
\qquad
A_0=a.
\]
For smooth $q$, the backward operator is the infinitesimal generator
\[
(L_t q)(s,a)
=
\frac12
\operatorname{tr}
\!\left\{
\Sigma_t(s,a)\nabla_a^2 q(s,a)
\right\}.
\]
Therefore, by Theorem~\ref{thm:gPGT}, the MPE is
\[
\Theta
=
\frac12 \sum_{t=1}^T
\gamma^{t-1}
\E\!\left[
\operatorname{tr}
\!\left\{
\Sigma_t(S_t,A_t)\nabla_a^2 q_t(S_t,A_t)
\right\}
\right],
\]
i.e., the MPE depends on the curvature of the $q$-functions.
\end{example}

\section{Doubly Robust Estimation}
\label{sec:estimator_construction}

Having established the convolutional representation, our next task is to develop
practical non-parametric methods for estimation and inference of $\Theta$. A
first, natural idea is to estimate the $q$-functions via machine learning or
some other non-parametric approach, and then plug resulting estimates into our
the obtained identification results. By Theorem~\ref{thm:gPGT}, the MPE decomposes as
\begin{equation}
\Theta
=
\sum_{t=1}^T G_t(q_t^\star),
\qquad
G_t(q)
:=
\gamma^{t-1}\E\!\left[
(L_t q)(S_t,A_t)
\right],
\label{eq:mpe_stage_decomp}
\end{equation}
where, to distinguish oracle nuisance functions from their estimates, we write
$q_t^\star$ for the true $q$-function and
$H_t^\star$ for the true stagewise score appearing in Theorem~\ref{thm:gPGT}.

Thus, given estimates $\{\widehat q_t\}_{t=1}^T$ of the true action-value
functions $\{q_t^\star\}_{t=1}^T$, we obtain a natural plug-in estimator
\begin{equation}
\widehat\Theta^{(\mathrm{Direct})}
:=
\sum_{t=1}^T
\gamma^{t-1}\E_n\!\left[
(L_t \widehat q_t)(S_t,A_t)
\right],
\label{eq:direct_plugin_def}
\end{equation}
where $\E_n[\cdot]$ denotes the empirical average over the observed sample.
Under sequential unconfoundedness, the oracle $q$-function from
\eqref{eq:def_q} is identified by the on-policy regression relation
\begin{equation}
q_t^\star(s,a)
=
\E\!\left[
\Gamma_t
\,\middle|\,
S_t=s,\ A_t=a
\right].
\label{eq:q_reg}
\end{equation}
Thus the direct estimator only requires nonparametric regression of the
discounted cumulative outcome on $(S_t,A_t)$.
However, while it is easy to implement, the direct estimator relies heavily on outcome
modeling. Estimating $q_t^\star$ requires predicting a long-horizon discounted
outcome, and any first-order error in $\widehat q_t$ enters into the final
estimator through the operator term $(L_t\widehat q_t)(S_t,A_t)$
\citep{chernozhukov2018double}.

Now, the forward representation given in
Theorem~\ref{thm:gPGT} also suggests a weighted estimator.
Given estimates $\widehat H_t$ of the oracle Riesz-representer-like
weights $H_t^\star$, we can plug into \eqref{eq:gPGT} to form the Sequential
Riesz Weighted (SRW) estimator
\begin{equation}
\widehat\Theta^{(\mathrm{SRW})}
:=
\sum_{t=1}^T
\gamma^{t-1}\E_n\!\left[
\widehat H_t(S_t,A_t)\,\Gamma_t
\right].
\label{eq:SRW_def_unified}
\end{equation}
$\widehat\Theta^{(\mathrm{SRW})}$ is also generally consistent, but it can be
badly affected by estimation error for the same reason as the direct estimator---and
can also be noisy because it weights raw discounted outcomes directly.

We therefore have two basic estimators of the same target: A regression
plug-in estimator and a weighted estimator. Following the logic of AIPW estimation
\citep{robins1994estimation,chernozhukov2022locally}, we then seek to use the score-weighted
approach to debias the estimator obtained via the regression approach, or equivalently
vice versa.
For fixed candidate functions $(q,H)$, define the one-step augmented functional
\begin{equation}
\widetilde G_t(q,H)
:=
G_t(q)
+
\gamma^{t-1}\E\!\left[
H(S_t,A_t)\,\bigl\{\Gamma_t-q(S_t,A_t)\bigr\}
\right].
\label{eq:one_step_augmented_functional}
\end{equation}
The next lemma formalizes its mixed-bias property:

\begin{lemma}[Mixed-bias identity]
\label{Lem:mixed_bias}
Fix $t\in\{1,\ldots,T\}$. Suppose the stage-$t$ conditions of
Theorem~\ref{thm:gPGT} hold. In addition, assume
\[
\Gamma_t\in L^2,
\qquad
q_t^\star(S_t,A_t)\in L^2,
\qquad
H_t^\star(S_t,A_t)\in L^2.
\]
Let $q\in\mathcal Q_t$ and let $H$ be measurable with respect to
$\sigma(S_t,A_t)$ such that
\[
q(S_t,A_t)\in L^2,
\qquad
H(S_t,A_t)\in L^2.
\]
Then we have the representation:
\begin{equation}
\widetilde G_t(q,H)-G_t(q_t^\star)
=
\gamma^{t-1}\E\!\left[
\bigl\{H_t^\star(S_t,A_t)-H(S_t,A_t)\bigr\}
\bigl\{q(S_t,A_t)-q_t^\star(S_t,A_t)\bigr\}
\right].
\label{eq:one_step_bias_identity}
\end{equation}
In particular,
$\widetilde G_t(q,H_t^\star)=G_t(q_t^\star)$ for any such $q$, and
$\widetilde G_t(q_t^\star,H)=G_t(q_t^\star)$ for any such $H$.
Equivalently, $\gamma^{t-1}H_t^\star$ is the unique Riesz representer of $G_t$.
\end{lemma}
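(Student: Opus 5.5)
The plan is to expand both sides of \eqref{eq:one_step_bias_identity} and show that, after cancellations, the difference is a single term of the form $\E[H_t^\star(S_t,A_t)\{q-q_t^\star\}(S_t,A_t)]$ versus $G_t(q)-G_t(q_t^\star)$. The key ingredient is a stagewise Riesz identity,
\begin{equation*}
\E\!\left[(L_t f)(S_t,A_t)\right]=\E\!\left[H_t^\star(S_t,A_t)\,f(S_t,A_t)\right],\qquad f\in\mathcal Q_t,\ f(S_t,A_t)\in L^2,
\end{equation*}
which is a single-stage version of the duality between the forward and backward representations in Theorem~\ref{thm:gPGT}.

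\textbf{Step 1: establish the Riesz identity.} Fix $s$. By Fubini and Definition~\ref{def:operator_generated}(i), for each $\varepsilon$,
\[
\int f(s,a)\,\pi_{t,\varepsilon}(a\mid s)\,d\lambda(a)=\int\!\!\int f(s,a)M_{t,\varepsilon}(a\mid a_0,s)\,d\lambda(a)\,\pi_t(a_0\mid s)\,d\lambda(a_0).
\]
Subtract $\int f\pi_t\,d\lambda$ from both sides, divide by $\varepsilon$, and let $\varepsilon\downarrow0$.
\begin{itemize}
\item The left side converges to $\langle L_t^*\pi_t,f\rangle(s)$.
\item The right side converges to $\langle L_t f,\pi_t\rangle(s)$.
\end{itemize}
Each limit is justified by dominated convergence, using the envelopes in condition (ii) of Theorem~\ref{thm:gPGT}. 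By (iii), $L_t^*\pi_t$ vanishes off $\{\pi_t>0\}$, so $\langle L_t^*\pi_t,f\rangle(s)=\int f H_t^\star\pi_t\,d\lambda$. Taking expectations over $S_t$ gives the identity, where $\E[(L_tf)(S_t,A_t)]$ is obtained by integrating against $\pi_t(\cdot\mid S_t)$, the observed action law.

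\textbf{Main obstacle.} The envelopes in Theorem~\ref{thm:gPGT} are stated for $q_t^\star$ itself, not for arbitrary $f$. To apply Step 1 to $f=q-q_t^\star$, I would read the stage-$t$ conditions as imposed on the class $\mathcal Q_t$ (implicit in $q\in\mathcal Q_t$), and use linearity of $L_t$ so that the identity for $q$ and for $q_t^\star$ separately yields it for their difference. If needed, I would add this as the implicit regularity assumption on $\mathcal Q_t$.

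\textbf{Step 2: algebra.} By \eqref{eq:q_reg} and the tower property, with $H$ square integrable and $\Gamma_t,q_t^\star\in L^2$ (so all products are integrable by Cauchy--Schwarz),
\[
\E[H\{\Gamma_t-q\}]=\E[H\{q_t^\star-q\}].
\]
By linearity of $L_t$ and Step 1,
\[
G_t(q)-G_t(q_t^\star)=\gamma^{t-1}\E[H_t^\star\{q-q_t^\star\}].
\]
Adding the two displays gives
\[
\widetilde G_t(q,H)-G_t(q_t^\star)=\gamma^{t-1}\E[(H_t^\star-H)(q-q_t^\star)],
\]
which is \eqref{eq:one_step_bias_identity}. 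The two special cases follow by setting $H=H_t^\star$ or $q=q_t^\star$.

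\textbf{Step 3: uniqueness.} Step 1 shows that $\gamma^{t-1}H_t^\star$ represents the linear functional $f\mapsto\gamma^{t-1}\E[(L_tf)(S_t,A_t)]$ in $L^2(\sigma(S_t,A_t))$. Suppose $\tilde H$ is another representer. Then $\E[(\gamma^{t-1}H_t^\star-\tilde H)f]=0$ for all admissible $f$. Provided $\mathcal Q_t$ is dense in $L^2$ (for example, it contains bounded smooth functions), this forces $\tilde H=\gamma^{t-1}H_t^\star$ almost surely. I would state this density requirement explicitly as the sense in which uniqueness holds.
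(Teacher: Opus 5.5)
Your proposal is correct and follows essentially the same route as the paper, which also first derives the weighted form $G_t(q)=\gamma^{t-1}\E[H_t^\star(S_t,A_t)\,q(S_t,A_t)]$ from the adjoint identity in the proof of Theorem~\ref{thm:gPGT} and then adds and subtracts $q_t^\star$ and applies the tower property. The paper likewise tacitly assumes that the finite-difference limits hold for every $q\in\mathcal Q_t$, which you make explicit; the one small difference is uniqueness, where the paper regards $G_t$ as defined on all of $L^2$ through the weighted form and tests with $q=r-\gamma^{t-1}H_t^\star$, while your density condition on $\mathcal Q_t$ is what justifies extending the operator form to $L^2$ in that way.
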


We next define the empirical stagewise estimator by replacing
$(q,H)$ with its estimated counterpart $(\widehat q_t,\widehat H_t)$:
\begin{equation}
\widehat G_t
:=
\E_n\!\left[
\gamma^{t-1}(L_t \widehat q_t)(S_t,A_t)
+
\gamma^{t-1}\widehat H_t(S_t,A_t)
\bigl\{\Gamma_t-\widehat q_t(S_t,A_t)\bigr\}
\right].
\label{eq:stage_t_one_step_estimator}
\end{equation}
Summing over stages gives the Augmented Sequential Riesz Weighted (ASRW)
estimator:
\begin{equation}
\widehat\Theta
:=
\sum_{t=1}^T \widehat G_t
=
\widehat\Theta^{(\mathrm{Direct})}
+
\sum_{t=1}^T
\gamma^{t-1}\E_n\!\left[
\widehat H_t(S_t,A_t)\,
\bigl\{\Gamma_t-\widehat q_t(S_t,A_t)\bigr\}
\right].
\label{eq:ASRW_global_def_OS}
\end{equation}
Thus ASRW starts from the direct plug-in estimator and adds one weighted
cumulative-outcome residual at each stage. The two basic estimators appear as
special cases: setting $\widehat H_t\equiv0$ recovers the direct estimator,
while setting $\widehat q_t\equiv0$ recovers SRW.
In practice, we implement \eqref{eq:ASRW_global_def_OS} with cross-fitting so
that each scored trajectory is evaluated using nuisance estimates trained on
separate folds. The construction is summarized in
Algorithm~\ref{alg:asrw_crossfit}.

\begin{algorithm}[t]
\caption{Cross-fitted ASRW estimator}
\label{alg:asrw_crossfit}
\begin{algorithmic}[1]
\Require i.i.d. trajectories $\{Z_i\}_{i=1}^n$, number of folds $K$, discount factor $\gamma$, and known stagewise backward operators $\{L_t\}_{t=1}^T$
\State Partition $\{1,\ldots,n\}$ into disjoint scoring folds $I_1,\ldots,I_K$
\For{$k=1,\ldots,K$}
    \State Using the training sample $\{Z_i : i\notin I_k\}$, estimate nuisance functions $\{\widehat q_t^{(-k)},\widehat H_t^{(-k)}\}_{t=1}^T$
    \For{$i\in I_k$}
        \State Compute $\Gamma_{i,t} := \sum_{u=t}^T \gamma^{u-t} Y_{i,u}$ for $t=1,\ldots,T$
        \For{$t=1,\ldots,T$}
            \State Compute
            \[
            \widehat\psi_{i,t}^{(-k)}
            :=
            \gamma^{t-1}(L_t \widehat q_t^{(-k)})(S_{i,t},A_{i,t})
            +
            \gamma^{t-1}\widehat H_t^{(-k)}(S_{i,t},A_{i,t})
            \bigl\{
            \Gamma_{i,t}-\widehat q_t^{(-k)}(S_{i,t},A_{i,t})
            \bigr\}.
            \]
        \EndFor
    \EndFor
\EndFor
\State \textbf{Output:}
\[
\widehat\Theta
=
\frac{1}{n}\sum_{k=1}^K\sum_{i\in I_k}\sum_{t=1}^T
\widehat\psi_{i,t}^{(-k)}.
\]
\end{algorithmic}
\end{algorithm}

With this cross-fitted construction in place, write $(\widehat q_t,\widehat H_t)$ for the out-of-fold nuisance pair used to score a generic trajectory. Equivalently, one may view the nuisance functions as having been estimated on an auxiliary training sample and then evaluated on an independent scoring trajectory. Applying Lemma~\ref{Lem:mixed_bias} fold by fold with $q=\widehat q_t$ and $H=\widehat H_t$, and then averaging over folds gives
\begin{equation}
\E\!\left[
\widehat\Theta
\right]
-\Theta
=
\sum_{t=1}^{T}
\gamma^{t-1}\E\!\left[
\bigl\{H_t^\star(S_t,A_t)-\widehat H_t(S_t,A_t)\bigr\}
\bigl\{\widehat q_t(S_t,A_t)-q_t^\star(S_t,A_t)\bigr\}
\right].
\label{eq:mixed_bias_identity_full}
\end{equation}
Thus the bias vanishes if $\widehat q_t=q_t^\star$ for all $t$, or if
$\widehat H_t=H_t^\star$ for all $t$. More generally, the $t$-th stagewise bias
vanishes whenever either nuisance component is correct at that stage. Under
joint misspecification, the remainder is second order.

To formalize the asymptotic behavior of the feasible estimator, it is helpful to
compare it with the corresponding oracle estimator that uses the true nuisance
functions in the same form.

\begin{lemma}[CLT for the oracle estimator]
\label{lem:quasi_oracle_clt}
Suppose the conditions of Lemma~\ref{Lem:mixed_bias} hold for each
$t=1,\ldots,T$ with $q=q_t^\star$ and $H=H_t^\star$. Define
\begin{equation}
\widehat\Theta^{\star}
:=
\sum_{t=1}^T
\E_n\!\left[
\gamma^{t-1}(L_t q_t^\star)(S_t,A_t)
+
\gamma^{t-1}H_t^\star(S_t,A_t)\,
\bigl\{\Gamma_t-q_t^\star(S_t,A_t)\bigr\}
\right].
\label{eq:quasi_oracle_estimator}
\end{equation}
Suppose the per-trajectory summand in
\eqref{eq:quasi_oracle_estimator} has finite second moment. Then
$\E[\widehat\Theta^{\star}]=\Theta$ and
\begin{equation}
\begin{split}
&\sqrt n\,\bigl(\widehat\Theta^{\star}-\Theta\bigr)
\ \xrightarrow{d}\
\mathcal N(0,\sigma_\star^2), \\
&\sigma_\star^2
=
\Var\!\left[
\sum_{t=1}^T
\gamma^{t-1}\left\{
(L_t q_t^\star)(S_t,A_t)
+
H_t^\star(S_t,A_t)\,
\left(\Gamma_t-q_t^\star(S_t,A_t)\right )
\right\}
\right].
\end{split}
\end{equation}
\end{lemma}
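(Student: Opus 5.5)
The plan is to write $\widehat\Theta^{\star}=\E_n[\psi^\star(Z)]$, where
\[
\psi^\star(Z):=\sum_{t=1}^T\gamma^{t-1}\Bigl\{(L_t q_t^\star)(S_t,A_t)+H_t^\star(S_t,A_t)\bigl(\Gamma_t-q_t^\star(S_t,A_t)\bigr)\Bigr\}.
\]
This is a fixed, non-random function of a single trajectory, because the oracle nuisances do not depend on the data. The argument then has two steps: show that $\E[\psi^\star(Z)]=\Theta$, and apply the classical Lindeberg--L\'evy central limit theorem to the i.i.d. sequence $\psi^\star(Z_1),\ldots,\psi^\star(Z_n)$.

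For unbiasedness I would argue stage by stage and use linearity. Take $q=q_t^\star$ and $H=H_t^\star$ in Lemma~\ref{Lem:mixed_bias}. The right-hand side of \eqref{eq:one_step_bias_identity} vanishes because $q-q_t^\star\equiv0$, so $\widetilde G_t(q_t^\star,H_t^\star)=G_t(q_t^\star)$.

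This can also be checked directly. The correction term equals $\E[H_t^\star(S_t,A_t)\{\Gamma_t-q_t^\star(S_t,A_t)\}]$. By the tower property and \eqref{eq:q_reg}, $\E[\Gamma_t-q_t^\star(S_t,A_t)\mid S_t,A_t]=0$. The product $H_t^\star\Gamma_t$ is integrable by the moment conditions of Theorem~\ref{thm:gPGT}. The product $H_t^\star q_t^\star$ is integrable by Cauchy--Schwarz, using the assumed $L^2$ bounds. Together these justify the conditioning step.

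Each term $(L_t q_t^\star)(S_t,A_t)$ is integrable by condition (i) of Theorem~\ref{thm:gPGT}, and we have $\E[\widehat\Theta^\star]=\E[\psi^\star(Z)]$ by linearity. Summing over $t$ then gives $\sum_t G_t(q_t^\star)$, which equals $\Theta$ by the backward representation \eqref{eq:gPGT2}, i.e.\ by \eqref{eq:mpe_stage_decomp}.

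For the limit distribution, the $Z_i$ are i.i.d. and $\psi^\star(Z)$ has finite second moment by hypothesis. So $\sigma_\star^2=\Var[\psi^\star(Z)]<\infty$, and the Lindeberg--L\'evy CLT gives
\[
\sqrt n\bigl(\E_n[\psi^\star]-\E[\psi^\star]\bigr)\xrightarrow{d}\mathcal N(0,\sigma_\star^2).
\]
Substituting $\E[\psi^\star]=\Theta$ gives the claim. In the degenerate case $\sigma_\star^2=0$ the limit is read as a point mass at zero.

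I do not expect a real obstacle here. The only step needing care is the bookkeeping of integrability: each piece of $\psi^\star$ must be separately integrable so that expectation can be split across stages and the conditioning argument is legitimate. These facts follow directly from the hypotheses of Theorem~\ref{thm:gPGT} and Lemma~\ref{Lem:mixed_bias}, as noted above.
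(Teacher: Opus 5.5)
Your proposal is correct and follows essentially the same route as the paper. You write $\widehat\Theta^\star$ as the empirical mean of the fixed score $\psi^\star(Z)$, show $\E[\psi^\star(Z)]=\sum_t G_t(q_t^\star)=\Theta$ because the correction term has conditional mean zero given $(S_t,A_t)$ by \eqref{eq:q_reg}, and then apply the i.i.d.\ central limit theorem. Your integrability bookkeeping and the alternative appeal to Lemma~\ref{Lem:mixed_bias} are harmless refinements that the paper leaves implicit.
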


The next theorem shows that the feasible cross-fitted estimator is first-order
equivalent to the oracle estimator whenever the value-side and score-side
nuisance errors are jointly small enough.

\begin{theorem}[Asymptotic normality of the ASRW estimator]
\label{thm:dr_clt}
Suppose the conditions of Lemma~\ref{lem:quasi_oracle_clt} hold, and construct
$\widehat\Theta$ using Algorithm~\ref{alg:asrw_crossfit}. On each scoring fold,
assume $\widehat q_t\in\mathcal Q_t$ and that the conditions of
Lemma~\ref{Lem:mixed_bias} hold conditionally on the training folds with
$q=\widehat q_t$ and $H=\widehat H_t$. Assume outcomes are uniformly bounded,
$|Y_t|\le Y_{\max}$ almost surely for all $t$, and let
\[
\Gamma_{\max}
:=
Y_{\max}\sum_{k=0}^{T-1}\gamma^k.
\]
Assume the regression nuisance is uniformly bounded:
\[
\sup_{a\in\mathcal A_t}
\bigl|\widehat q_t(S_t,a)\bigr|
\le
\Gamma_{\max}.
\]
Assume further that, uniformly over $t$,
\[
\|\widehat q_t-q_t^\star\|_{2}, \, 
\|L_t(\widehat q_t-q_t^\star)\|_{2}
=
o_p\!\bigl(n^{-\alpha_q}\bigr), \ \ \ \
\|\widehat H_t-H_t^\star\|_{2}
=
o_p\!\bigl(n^{-\alpha_H}\bigr),
\]
with exponents that satisfy
\[
\alpha_q+\alpha_H\ge \tfrac12, \ \ \ \ \alpha_q, \, \alpha_H \geq 0.
\]
Then the feasible estimator is first-order equivalent to the oracle estimator,
$\sqrt n\,\bigl(\widehat\Theta-\widehat\Theta^\star\bigr)\to_p 0$,
and so
\begin{equation}
\sqrt n\,\bigl(\widehat\Theta-\Theta\bigr)
\Rightarrow
\mathcal N(0,\sigma_\star^2),
\end{equation}
where $\sigma_\star^2$ is the oracle variance from
Lemma~\ref{lem:quasi_oracle_clt}.
\end{theorem}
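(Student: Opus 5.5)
The plan is the standard cross-fitting decomposition used for double machine learning. Fix a scoring fold $I_k$ and condition on the training folds, so that $(\widehat q_t,\widehat H_t)$ are fixed functions. For a generic trajectory, write $\psi_t(q,H)$ for the stage-$t$ per-trajectory score
\[
\gamma^{t-1}(L_t q)(S_t,A_t)+\gamma^{t-1}H(S_t,A_t)\{\Gamma_t-q(S_t,A_t)\}.
\]
Set $\Delta=\sum_t\{\psi_t(\widehat q_t,\widehat H_t)-\psi_t(q_t^\star,H_t^\star)\}$. Then $\widehat\Theta-\widehat\Theta^\star=\sum_k (|I_k|/n)\,\E_{n,k}[\Delta]$, where $\E_{n,k}$ denotes the average over fold $I_k$. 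Since $K$ is fixed, it suffices to show $\sqrt n\,\E_{n,k}[\Delta]\to_p0$ for each $k$. Once this holds, Lemma~\ref{lem:quasi_oracle_clt} and Slutsky's theorem give the stated limit.

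Split $\E_{n,k}[\Delta]$ into a centered part $(\E_{n,k}-\E)[\Delta\mid\text{train}]$ and a bias part $\E[\Delta\mid\text{train}]$.

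\textbf{Bias part.} Lemma~\ref{Lem:mixed_bias} applies conditionally on the training folds with $q=\widehat q_t$, $H=\widehat H_t$, and it also applies at the truth, where the oracle score has mean $G_t(q_t^\star)$. This gives exactly the mixed-bias expression \eqref{eq:mixed_bias_identity_full}. By Cauchy--Schwarz, its stage-$t$ term is bounded by $\|\widehat H_t-H_t^\star\|_2\,\|\widehat q_t-q_t^\star\|_2=o_p(n^{-\alpha_H-\alpha_q})=o_p(n^{-1/2})$. Summing over the finitely many stages gives $\sqrt n\,\E[\Delta\mid\text{train}]=o_p(1)$.

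\textbf{Centered part.} Conditional on the training data, $\sqrt{|I_k|}(\E_{n,k}-\E)[\Delta]$ has mean zero and variance at most $\E[\Delta^2\mid\text{train}]$. By Chebyshev's inequality conditionally, and then unconditionally via the standard lemma that conditional convergence in probability to zero implies unconditional convergence, it suffices to show $\E[\Delta^2\mid\text{train}]\to_p0$. Expand the stage-$t$ score difference as
\[
\gamma^{t-1}\Big[L_t(\widehat q_t-q_t^\star)+(\widehat H_t-H_t^\star)(\Gamma_t-\widehat q_t)-H_t^\star(\widehat q_t-q_t^\star)\Big].
\]
The first term has $L^2$ norm $\|L_t(\widehat q_t-q_t^\star)\|_2=o_p(1)$ by assumption, using $\alpha_q\ge0$.

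The second term is where boundedness is used. We have $|\Gamma_t|\le\Gamma_{\max}$ and $|\widehat q_t|\le\Gamma_{\max}$, so $|\Gamma_t-\widehat q_t|\le 2\Gamma_{\max}$. Its $L^2$ norm is therefore at most $2\Gamma_{\max}\|\widehat H_t-H_t^\star\|_2=o_p(1)$.

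The third term, $H_t^\star(\widehat q_t-q_t^\star)$, is the main obstacle. Here $H_t^\star$ is only assumed to be in $L^2$, and an $L^2$-small $\widehat q_t-q_t^\star$ does not obviously make the product $L^2$-small. I would handle it with a truncation argument. Both $\widehat q_t$ and $q_t^\star$ are bounded by $\Gamma_{\max}$ (the latter because $|\Gamma_t|\le\Gamma_{\max}$), so $|\widehat q_t-q_t^\star|\le 2\Gamma_{\max}$. For any $M>0$,
\[
\E\big[(H_t^\star)^2(\widehat q_t-q_t^\star)^2\big]\le M^2\|\widehat q_t-q_t^\star\|_2^2+4\Gamma_{\max}^2\,\E\big[(H_t^\star)^2\mathbf 1\{|H_t^\star|>M\}\big].
\]
Since $H_t^\star\in L^2$, the second term can be made arbitrarily small by choosing $M$ large, uniformly in $n$. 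The first term is then $o_p(1)$ for that fixed $M$. Together these give $o_p(1)$.

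Combining the three terms, $\E[\Delta^2\mid\text{train}]=o_p(1)$, so the centered part is $o_p(n^{-1/2})$. Hence $\sqrt n(\widehat\Theta-\widehat\Theta^\star)\to_p0$, and asymptotic normality with variance $\sigma_\star^2$ follows from Lemma~\ref{lem:quasi_oracle_clt}.
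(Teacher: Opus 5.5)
Your proposal is correct and follows essentially the same argument as the paper. The paper writes $\widehat\Theta-\widehat\Theta^\star=A_n+B_n+C_n$, with $A_n$ and $B_n$ mean-zero and the mean of $C_n$ equal to the mixed bias. It then uses Chebyshev with the bounds $|e_t^\star|,|\Delta q_t|\le 2\Gamma_{\max}$, the same $M$-truncation for $H_t^\star\Delta q_t$, and Cauchy--Schwarz for the bias. Your split into a conditional bias term (via Lemma~\ref{Lem:mixed_bias}) and a single centered second-moment term is only a regrouping of those same pieces.
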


Theorem~\ref{thm:dr_clt} formalizes the double-robustness claim for the
cross-fitted ASRW estimator. The direct estimator is sensitive to errors in the
value nuisance, while the score-weighted estimator is sensitive to errors in the
score nuisance. Their augmentation yields an estimating equation whose bias is
second order. As a result, ASRW is first-order equivalent to the oracle
estimator whenever the value-side and score-side errors are jointly small
enough, with the direct plug-in term controlled through the operator-side error $\|L_t(\widehat q_t-q_t^\star)\|_{2}.$


\section{Experiments}
\label{sec:experiments}

We evaluate our proposed approach in two experiments set within the
context of Example \ref{ex:location_shift}, i.e., where our proposed
intervention is to infinitesimally increase the value of a continuous
action. Our first experiment involves a simple auto-regressive process,
but with hidden states so basic time-series regression cannot be applied.
Our second experiment is motivated by a pricing application where demand
responds to both current and past prices, in that consumers use past prices
to form a (latent) reference level to which they then compare current prices.

We compare four estimators from
Section~\ref{sec:estimator_construction}: Direct, SRW, ASRW, and ASRW with oracle
weights. In both designs, we use $K$-fold cross-fitting with $K=5$. On each
training split, we estimate the action-value functions
$\{\widehat q_t\}_{t=1}^T$ using feedforward neural networks.
By Example~\ref{ex:location_shift}, the backward operator in our setting is
\[
(L_t q)(s,a)=\partial_a q(s,a),
\qquad
G_t(q)
=
\gamma^{t-1}\E\!\big[\partial_a q(S_t,A_t)\big].
\]
Hence the direct estimator reduces to
\[
\widehat\Theta^{(\mathrm{Direct})}
=
\sum_{t=1}^T
\E_n\!\bigl[\gamma^{t-1}\,\partial_a\widehat q_t(S_t,A_t)\bigr].
\]
Meanwhile, we estimate the weight function $\{\widehat H_t\}_{t=1}^T$ using the
``Auto-DML'' variational characterization of \citet{chernozhukov2022automatic},
applied stage by stage to the linear functional
\[
\Psi_t(q)
:=
\E\!\left[
(L_t q)(S_t,A_t)
\right].
\]
Accordingly, on each training split we
solve the empirical variational problem
\[
\widehat H_t
\in
\underset{h\in\mathcal H_t}{\operatorname{argmin}}
\left\{ \E_n\!\left[h(S_t,A_t)^2 - 2(L_t h)(S_t,A_t) \right] \right\}.
\]
Given the form of $L_t$ in our setting, this simplifies to
\begin{equation}
\label{eq:weight_learner}
\widehat H_t
\in
\underset{h\in\mathcal H_t}{\operatorname{argmin}}
\left\{ \E_n\!\left[h(S_t,A_t)^2 - 2 \partial_a h(S_t,A_t)\right]\right\}.
\end{equation}
Finally, we form the SRW and ASRW estimators as in \eqref{eq:SRW_def_unified}
and Algorithm \ref{alg:asrw_crossfit} respectively.

In order to compute the weights $H^\star_t$ for the oracle ASRW estimator,
we note that for our first example we use a Gaussian baseline policy
$A_t\mid S_t\sim\mathcal N(m_t(S_t),v_t(S_t))$,
and so the oracle stagewise score for the location-shift path can explicitly
be written as
\begin{equation}
H_t^\star(S_t,A_t)
=
\left.
\partial_\varepsilon
\log \pi_{t,\varepsilon}(A_t\mid S_t)
\right|_{\varepsilon=0}
=
\frac{A_t-m_t(S_t)}{v_t(S_t)}.
\end{equation}
Meanwhile, in the pricing experiment the baseline policy has a clipped-Gaussian
distribution, and the oracle weights $H^\star_t$ remain available in closed
form up to the boundary corrections induced by clipping; we record the
exact piecewise expression in Appendix~\ref{app:simulator-details}.
We emphasize that we only use these parametric formulas to derive the
oracle weights; we always use the non-parametric
specification \eqref{eq:weight_learner} when running our method.
Finally, in both designs, action derivatives appearing in the score-learning objective
are evaluated numerically using symmetric finite differences in the action coordinate.

\subsection{A first example}
\label{sec:simulation}

We begin with a benchmark DGP designed to isolate the statistical difficulty
created by downstream effects in a continuous-action setting when the analyst
does not observe the full dynamic state. The simulator is a partially observed
Markov model: There is a latent regime $U_t\in\{-1,+1\}$, the full system is
Markov in $(X_t,U_t)$, but the analyst only observes the history
\[
S_t=(X_1,A_1,Y_1,\ldots,X_{t-1},A_{t-1},Y_{t-1},X_t).
\]
Because actions depend only on $S_t$, sequential unconfoundedness is satisfied. At the same time, because $U_t$ is unobserved, the analyst does not
observe a Markov state sufficient for Bellman recursion. Consequently,
classical OPE methods that rely on fitting an observed-state MDP would be
misspecified.

We fix a horizon $T\in\mathbb N$ and discount factor $\gamma=0.99$. Let
$U_t\in\{-1,+1\}$ denote a latent regime, and let
$X_t=(X_{t,1},\ldots,X_{t,p})^\top\in\mathbb R^p$ denote the observed
covariates at time $t$. Define the loading vector
$\ell=(\ell_1,\ldots,\ell_p)^\top\in\mathbb R^p$ by
\[
\ell_j
=
\frac{j^{-1/2}}{\left(\sum_{k=1}^p k^{-1}\right)^{1/2}},
\qquad
j=1,\ldots,p.
\]
Initially,
\[
\Pr(U_1=1)=\Pr(U_1=-1)=\frac12,
\qquad
X_1 = 0.75\,U_1\,\ell + 0.60\,\zeta_1,
\qquad
\zeta_1\sim \mathcal N(0,I_p).
\]
For $t\ge 1$, let $\bar X_t:=p^{-1}\sum_{j=1}^p X_{t,j}$ and set
$A_0\equiv 0$. Under the baseline behavior policy,
\begin{equation}
A_t \mid S_t \sim \mathcal N\!\bigl(m_t(S_t),\,\sigma_A^2\bigr),
\qquad
m_t(S_t)=0.10+0.35\,\bar X_t+0.15\,A_{t-1},
\label{eq:sim_behavior_policy}
\end{equation}
with $\sigma_A=1$. 
Period rewards depend on both the observed state and the latent regime:
\begin{equation}
Y_t
=
1+\bar X_t+1.25\,A_t+0.75\,U_t+0.50\,A_tU_t+\xi_t,
\qquad
\xi_t\stackrel{\mathrm{iid}}{\sim}\mathcal N(0,1),
\label{eq:sim_reward}
\end{equation}
and we clip $Y_t$ to $[-10,10]$. For $t<T$, the latent regime evolves according to
\begin{equation}
\Pr(U_{t+1}=1\mid X_t,U_t)
=
\operatorname{logit}^{-1}\!\bigl(U_t+0.80\,\bar X_t\bigr),
\label{eq:sim_hidden_transition}
\end{equation}
while the observed covariates follow
\begin{equation}
X_{t+1,j}
=
0.65\,X_{t,j}+0.20\,A_t+0.85\,U_t\,\ell_j+\varepsilon_{t+1,j},
\qquad
\varepsilon_{t+1,j}\stackrel{\mathrm{iid}}{\sim}\mathcal N(0,1),
\label{eq:sim_state_transition}
\end{equation}
for $j=1,\ldots,p$. Thus $U_t$ affects both current rewards and future
observed covariates, while its own transition law depends on $\bar X_t$. 

For each configuration of $(n,p,T)$, we perform $R=1000$ Monte Carlo replications. In each replication, we simulate $n$ trajectories under the
baseline policy, estimate the nuisance components using the cross-fitting pipeline above, and compute the four estimators. The target $\Theta$ is
approximated by a separate large Monte Carlo run using symmetric finite differences along the location-shift path.
We report empirical bias, root mean squared error (RMSE), and coverage of nominal $95\%$ Wald confidence intervals.

Table~\ref{tab:MPE_results} and Figure~\ref{fig:MPE_RMSE} show a clear and
stable pattern across configurations. The direct estimator exhibits substantial bias and severe undercoverage, and both worsen as the horizon grows. SRW, by contrast, suffers from high variance. ASRW delivers the smallest RMSE among the feasible estimators and closely tracks ASRW (oracle score). The gains are most pronounced for larger $T$\ , where purely regression-based and weight-based methods are most fragile.

\begin{figure}[t]
  \centering

  \includegraphics[
    width=\textwidth,
    height=0.26\textheight,
    keepaspectratio
  ]{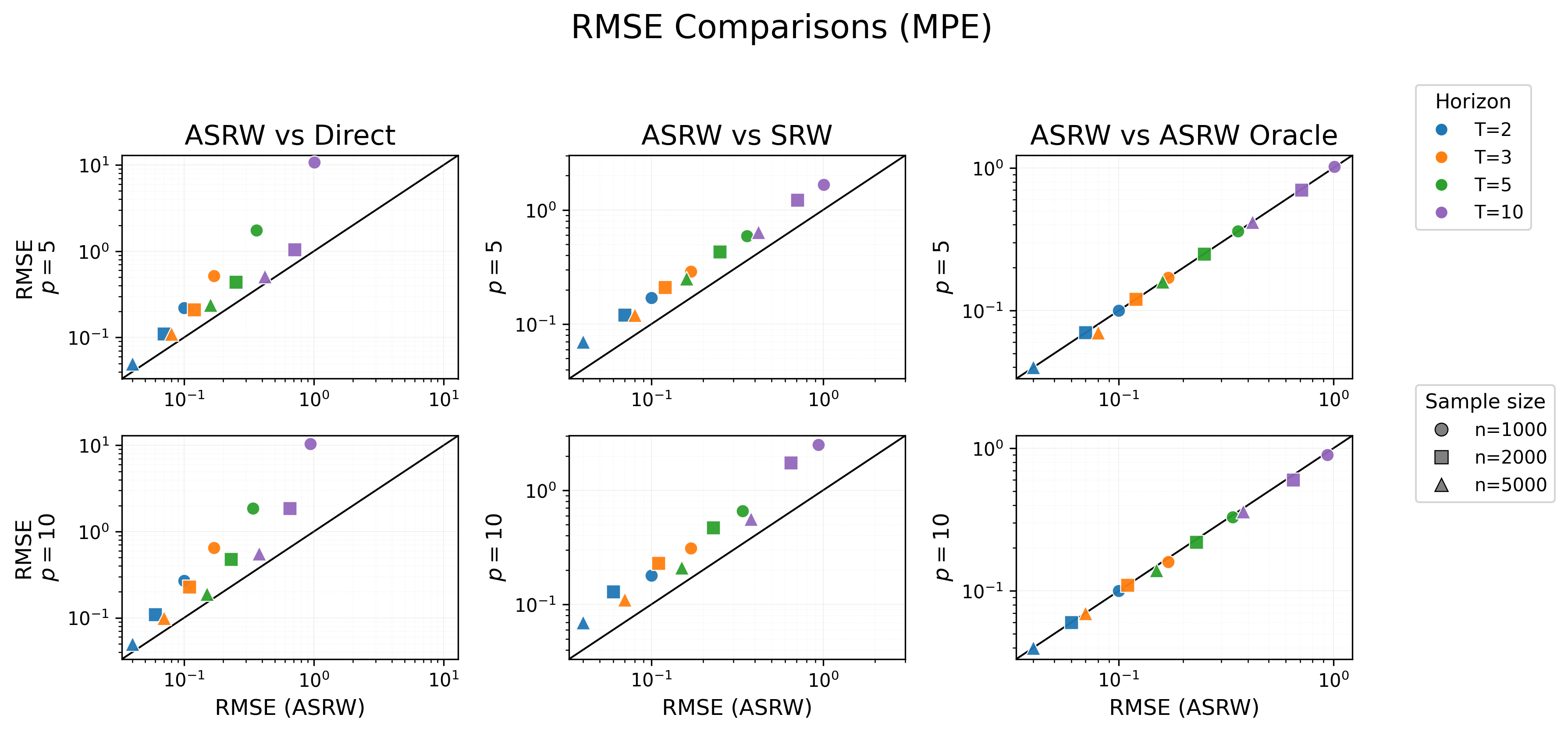}

  \caption{RMSE comparisons across benchmark configurations. Each point corresponds to a setting $(n,p,T)$ and plots the RMSE of ASRW (learned score) against the RMSE of a comparator.}
  \label{fig:MPE_RMSE}
\end{figure}

\begin{table}[p]
\centering

\renewcommand{\arraystretch}{1.15}
\setlength{\tabcolsep}{2.5pt}
\scriptsize

\providecommand{\mc}[2]{$#1\, (#2)$}
\providecommand{\best}[1]{\textbf{#1}}

\begin{adjustbox}{max width=\textwidth,keepaspectratio}
\begin{tabular}{cc *{4}{ccc}}
\toprule
\multicolumn{2}{c}{Design}
& \multicolumn{3}{c}{Direct}
& \multicolumn{3}{c}{SRW}
& \multicolumn{3}{c}{ASRW}
& \multicolumn{3}{c}{ASRW Oracle} \\
\cmidrule(lr){1-2}
\cmidrule(lr){3-5}
\cmidrule(lr){6-8}
\cmidrule(lr){9-11}
\cmidrule(lr){12-14}
$n$ & $p$
& Bias & RMSE & Covg
& Bias & RMSE & Covg
& Bias & RMSE & Covg
& Bias & RMSE & Covg \\
\midrule

\multicolumn{14}{c}{$T=2$} \\
\midrule
1000 & 5
& \mc{-0.19}{0.6} & \mc{0.22}{0.5} & $0.53$
& \mc{-0.11}{0.6} & \mc{0.17}{0.4} & $0.98$
& \mc{0.00}{0.5} & \mc{\best{0.10}}{0.2} & $0.94$
& \mc{0.00}{0.5} & \mc{0.10}{0.2} & $0.95$ \\

1000 & 10
& \mc{-0.24}{0.6} & \mc{0.27}{0.5} & $0.37$
& \mc{-0.11}{0.6} & \mc{0.18}{0.5} & $0.97$
& \mc{0.01}{0.5} & \mc{\best{0.10}}{0.2} & $0.95$
& \mc{0.00}{0.5} & \mc{0.10}{0.2} & $0.95$ \\

2000 & 5
& \mc{-0.07}{0.5} & \mc{0.11}{0.4} & $0.70$
& \mc{-0.07}{0.5} & \mc{0.12}{0.4} & $0.98$
& \mc{0.00}{0.5} & \mc{\best{0.07}}{0.2} & $0.94$
& \mc{0.00}{0.5} & \mc{0.07}{0.2} & $0.95$ \\

2000 & 10
& \mc{-0.08}{0.5} & \mc{0.11}{0.4} & $0.74$
& \mc{-0.08}{0.5} & \mc{0.13}{0.4} & $0.97$
& \mc{0.00}{0.5} & \mc{\best{0.06}}{0.1} & $0.96$
& \mc{0.00}{0.5} & \mc{0.06}{0.1} & $0.96$ \\

\cdashline{1-14}

5000 & 5
& \mc{-0.03}{0.5} & \mc{0.05}{0.2} & $0.67$
& \mc{-0.02}{0.5} & \mc{0.07}{0.2} & $0.99$
& \mc{0.00}{0.5} & \mc{\best{0.04}}{0.1} & $0.95$
& \mc{0.00}{0.5} & \mc{0.04}{0.1} & $0.95$ \\

5000 & 10
& \mc{-0.03}{0.5} & \mc{0.05}{0.2} & $0.71$
& \mc{-0.02}{0.5} & \mc{0.07}{0.2} & $0.99$
& \mc{0.00}{0.4} & \mc{\best{0.04}}{0.1} & $0.95$
& \mc{0.00}{0.4} & \mc{0.04}{0.1} & $0.95$ \\

\midrule
\multicolumn{14}{c}{$T=3$} \\
\midrule

1000 & 5
& \mc{-0.48}{1} & \mc{0.52}{1} & $0.16$
& \mc{-0.19}{1} & \mc{0.29}{0.8} & $0.97$
& \mc{-0.02}{1} & \mc{\best{0.17}}{0.4} & $0.94$
& \mc{-0.02}{1} & \mc{0.17}{0.4} & $0.95$ \\

1000 & 10
& \mc{-0.62}{1} & \mc{0.65}{1} & $0.06$
& \mc{-0.21}{1} & \mc{0.31}{0.8} & $0.98$
& \mc{0.02}{0.9} & \mc{\best{0.17}}{0.4} & $0.94$
& \mc{0.00}{0.9} & \mc{0.16}{0.4} & $0.95$ \\

\cdashline{1-14}

2000 & 5
& \mc{-0.17}{0.9} & \mc{0.21}{0.7} & $0.52$
& \mc{-0.14}{0.9} & \mc{0.21}{0.7} & $0.98$
& \mc{-0.03}{0.9} & \mc{\best{0.12}}{0.3} & $0.93$
& \mc{-0.03}{0.9} & \mc{0.12}{0.3} & $0.94$ \\

2000 & 10
& \mc{-0.19}{0.9} & \mc{0.23}{0.7} & $0.55$
& \mc{-0.16}{0.9} & \mc{0.23}{0.7} & $0.97$
& \mc{0.00}{0.8} & \mc{\best{0.11}}{0.3} & $0.96$
& \mc{-0.01}{0.8} & \mc{0.11}{0.2} & $0.96$ \\

\cdashline{1-14}

5000 & 5
& \mc{-0.07}{0.8} & \mc{0.11}{0.6} & $0.52$
& \mc{-0.04}{0.9} & \mc{0.12}{0.4} & $0.99$
& \mc{-0.02}{0.8} & \mc{\best{0.08}}{0.3} & $0.95$
& \mc{-0.02}{0.8} & \mc{0.07}{0.3} & $0.95$ \\

5000 & 10
& \mc{-0.07}{0.8} & \mc{0.10}{0.5} & $0.60$
& \mc{-0.03}{0.8} & \mc{0.11}{0.3} & $0.99$
& \mc{0.00}{0.8} & \mc{\best{0.07}}{0.2} & $0.94$
& \mc{0.00}{0.8} & \mc{0.07}{0.2} & $0.95$ \\

\midrule
\multicolumn{14}{c}{$T=5$} \\
\midrule

1000 & 5
& \mc{-1.73}{2} & \mc{1.76}{2} & $0.00$
& \mc{-0.42}{3} & \mc{0.59}{2} & $0.98$
& \mc{-0.06}{2} & \mc{\best{0.36}}{0.9} & $0.95$
& \mc{-0.05}{2} & \mc{0.36}{0.9} & $0.95$ \\

1000 & 10
& \mc{-1.84}{2} & \mc{1.87}{2} & $0.00$
& \mc{-0.51}{2} & \mc{0.66}{2} & $0.95$
& \mc{0.07}{2} & \mc{\best{0.34}}{0.9} & $0.94$
& \mc{0.03}{2} & \mc{0.33}{0.8} & $0.94$ \\

\cdashline{1-14}

2000 & 5
& \mc{-0.36}{2} & \mc{0.44}{2} & $0.40$
& \mc{-0.31}{2} & \mc{0.43}{2} & $0.97$
& \mc{-0.06}{2} & \mc{\best{0.25}}{0.8} & $0.94$
& \mc{-0.07}{2} & \mc{0.25}{0.8} & $0.95$ \\

2000 & 10
& \mc{-0.42}{2} & \mc{0.48}{2} & $0.41$
& \mc{-0.36}{2} & \mc{0.47}{2} & $0.96$
& \mc{0.04}{2} & \mc{\best{0.23}}{0.7} & $0.93$
& \mc{0.02}{2} & \mc{0.22}{0.6} & $0.94$ \\

\cdashline{1-14}

5000 & 5
& \mc{-0.18}{2} & \mc{0.24}{2} & $0.42$
& \mc{-0.07}{2} & \mc{0.25}{0.8} & $0.99$
& \mc{-0.05}{2} & \mc{\best{0.16}}{0.8} & $0.94$
& \mc{-0.06}{2} & \mc{0.16}{0.8} & $0.94$ \\

5000 & 10
& \mc{-0.12}{2} & \mc{0.19}{1} & $0.60$
& \mc{-0.01}{2} & \mc{0.21}{0.5} & $0.99$
& \mc{0.04}{2} & \mc{\best{0.15}}{0.6} & $0.95$
& \mc{0.03}{2} & \mc{0.14}{0.5} & $0.95$ \\

\midrule
\multicolumn{14}{c}{$T=10$} \\
\midrule

1000 & 5
& \mc{-10.73}{8} & \mc{10.74}{8} & $0.00$
& \mc{-1.24}{9} & \mc{1.67}{7} & $0.97$
& \mc{-0.24}{9} & \mc{\best{1.01}}{3} & $0.96$
& \mc{-0.04}{9} & \mc{1.02}{2} & $0.95$ \\

1000 & 10
& \mc{-10.34}{7} & \mc{10.36}{7} & $0.00$
& \mc{-2.27}{8} & \mc{2.50}{7} & $0.87$
& \mc{-0.24}{7} & \mc{\best{0.94}}{3} & $0.95$
& \mc{0.09}{7} & \mc{0.90}{2} & $0.95$ \\

\cdashline{1-14}

2000 & 5
& \mc{-0.82}{9} & \mc{1.05}{7} & $0.45$
& \mc{-0.91}{9} & \mc{1.22}{7} & $0.96$
& \mc{0.01}{9} & \mc{\best{0.71}}{2} & $0.93$
& \mc{-0.10}{9} & \mc{0.70}{2} & $0.93$ \\

2000 & 10
& \mc{-1.74}{7} & \mc{1.85}{7} & $0.07$
& \mc{-1.58}{7} & \mc{1.74}{7} & $0.87$
& \mc{0.15}{7} & \mc{\best{0.65}}{2} & $0.94$
& \mc{0.03}{7} & \mc{0.60}{2} & $0.95$ \\

\cdashline{1-14}

5000 & 5
& \mc{-0.27}{8} & \mc{0.51}{4} & $0.54$
& \mc{-0.04}{9} & \mc{0.64}{1} & $0.99$
& \mc{-0.03}{8} & \mc{\best{0.42}}{1} & $0.94$
& \mc{-0.06}{8} & \mc{0.42}{2} & $0.94$ \\

5000 & 10
& \mc{-0.42}{7} & \mc{0.56}{5} & $0.52$
& \mc{-0.09}{7} & \mc{0.56}{2} & $0.99$
& \mc{0.13}{7} & \mc{\best{0.38}}{2} & $0.95$
& \mc{0.06}{7} & \mc{0.36}{1} & $0.95$ \\

\bottomrule
\end{tabular}
\end{adjustbox}

\caption{Performance comparison of four estimators of the MPE.
Bias and RMSE are reported on the original scale.
Parentheses report standard errors in units of $10^{-2}$, i.e., on the scale of the last printed digit.
Standard errors also account for Monte Carlo error in estimating the oracle, as follows.
Write \smash{$\mathrm{OSE}$} for the Monte Carlo standard error of the oracle, and
write \smash{$\widehat B$}, \smash{$\widehat R$}, \smash{$\mathrm{SE}_B$}, and \smash{$\mathrm{SE}_R$}
for the estimated bias, estimated RMSE, and their Monte Carlo standard errors across
the 1,000 simulation replications. For the bias, the reported standard error is
\smash{$(\mathrm{SE}_B^2+\mathrm{OSE}^2)^{1/2}$}. The reported standard
error for the RMSE is approximately 
\smash{$(\mathrm{SE}_R^2+(\widehat B/\widehat R)^2\mathrm{OSE}^2)^{1/2}$}, as justified by a delta-method approximation.\textsuperscript{2}
Boldface is used to highlight the smallest RMSE among the feasible estimators within each design.}
\label{tab:MPE_results}

\vspace{0.8em}
\begin{minipage}{0.98\textwidth}
\footnotesize
\textsuperscript{2}
For fixed simulation estimates $\widehat\theta_1,\ldots,\widehat\theta_M$, define
\[
g(\theta)
=
\sqrt{
M^{-1}\sum_{m=1}^M\left(\widehat\theta_m-\theta\right)^2
} \  \text{ and so } \ 
g'(\theta)
=
-
M^{-1}\sum_{m=1}^M(\widehat\theta_m-\theta)
\,/\,
g(\theta).
\]
Evaluating at the oracle target $\theta_0$ gives
$g'(\theta_0)\approx-\widehat B/\widehat R$.
Therefore oracle Monte Carlo error contributes approximately
$(\widehat B/\widehat R)^2\mathrm{OSE}^2$
to the variance of the estimated RMSE.
\end{minipage}

\end{table}

\subsection{Dynamic pricing simulator with reference effects}
\label{sec:empirical-pricing}

We now turn to a more structured design in which downstream effects arise
through an explicit behavioral channel. In dynamic pricing, a higher current
price typically lowers contemporaneous demand but can also raise the consumer's
internal reference level, making later prices appear more attractive relative
to that benchmark. This creates the same immediate-versus-downstream reward tradeoff
that motivates the MPE.

We simulate a dynamic pricing problem over periods $t=1,\ldots,T$, where the
platform posts a scalar price $A_{i,t}\in[p_{\min},p_{\max}]$ for consumer $i$.
We continue to work with the full observed-history state
\[
S_{i,t}
=
(X_{i,1},A_{i,1},Y_{i,1},\ldots,
X_{i,t-1},A_{i,t-1},Y_{i,t-1},X_{i,t}),
\]
but in an environment that is only partially observed. Each consumer has
latent heterogeneity $U_i:=(W_i,\alpha_i)$, where $W_i$ governs willingness to
pay and $\alpha_i$ captures how quickly the consumer updates an internal
reference level. The individual's internal reference price $r^\star_{i,t}$,
which evolves over time, is also latent. The observed state $X_{i,t}$ is a
low-dimensional summary of recent prices, recent revenues, and seasonality.
Appendix~\ref{app:simulator-details} gives the exact state definition and
transition equations.

Conditional on the observed history, the current price, and the latent state,
the purchase indicator $B_{i,t}\in\{0,1\}$ is drawn from a logistic demand
model of the form
\[
\Pr(B_{i,t}=1\mid S_{i,t},A_{i,t},U_i,r^\star_{i,t})
=
\sigma\!\left(
f_{\mathrm{season}}(t)
+
f_{\mathrm{direct}}(S_{i,t},A_{i,t};U_i)
+
f_{\mathrm{ref}}(r^\star_{i,t}-A_{i,t};U_i)
\right),
\]
where $\sigma(\cdot)$ is the sigmoid function. Here
$f_{\mathrm{season}}$ captures predictable seasonal demand variation,
$f_{\mathrm{direct}}$ reflects the usual direct price effect, and
$f_{\mathrm{ref}}$ governs the reference effect through the gap between the
consumer's internal benchmark and the posted price. In particular, the direct
component is decreasing in the current price, while the reference component is
increasing when the current price is low relative to the consumer's internal
reference level. Period revenue is
\[
Y_{i,t}:=A_{i,t}B_{i,t},
\]
and the discounted cumulative reward is
\[
\Gamma_{i,t}:=\sum_{u=t}^T\gamma^{u-t}Y_{i,u}.
\]
The economically important point is that current prices affect future outcomes
through the hidden reference dynamics: A higher price can reduce current
demand, but it can also move the consumer's internal benchmark upward and
thereby change future demands.

\begin{table}[t]
\centering
\begin{tabular}{lcc}
\toprule
Estimator & Bias & RMSE \\
\midrule
Direct plug-in & $1.02\,(2)$ & $1.11\,(2)$ \\
SRW & $-0.06\,(4)$ & $0.79\,(2)$ \\
ASRW (learned score) & $-0.00\,(2)$ & $0.40\,(1)$ \\
ASRW (oracle score) & $-0.02\,(2)$ & $0.40\,(1)$ \\
\bottomrule
\end{tabular}
\caption{Performance comparison of MPE estimators in the dynamic pricing simulator with hidden consumer heterogeneity and reference effects. Bias and RMSE are reported on the original scale.
Parentheses report Monte Carlo standard errors on the percentage level.}
\label{tab:empirical_study}
\end{table}

\begin{figure}[t]
    \centering
    \includegraphics[width=0.9\linewidth]{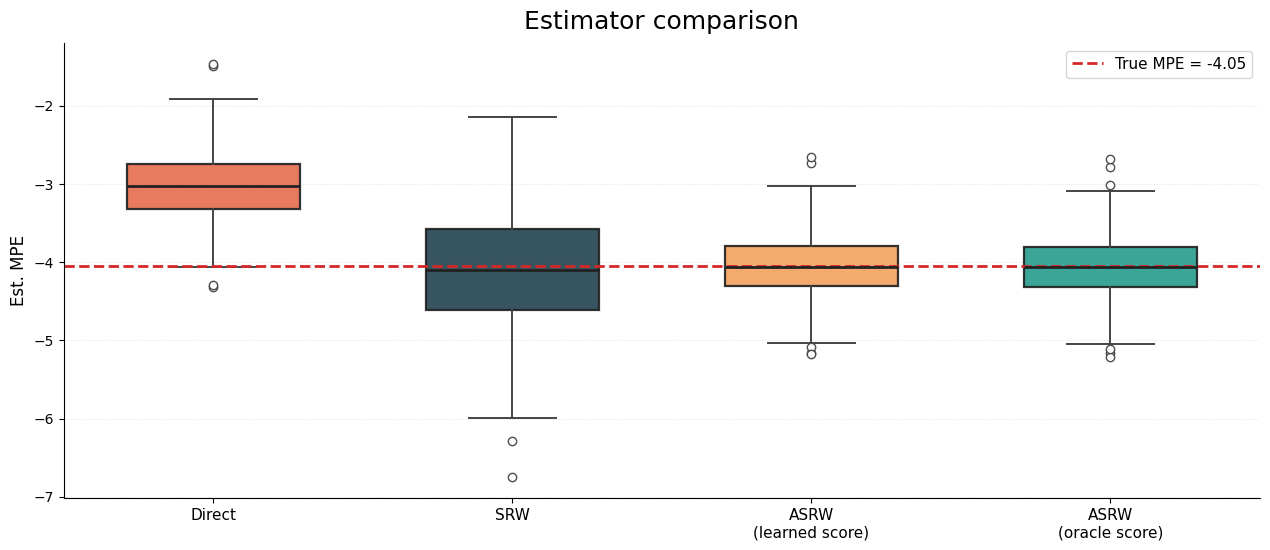}
    \caption{Sampling distributions of the four MPE estimators across replications in the dynamic pricing simulator with hidden consumer heterogeneity and reference effects.}
    \label{fig:boxplot_closed_form}
\end{figure}

Data are generated under a Gaussian-type baseline pricing policy
$\pi=(\pi_1,\ldots,\pi_T)$ that maps the current information set to a
distribution over prices. In our implementation, the policy takes the form
\[
A^{\mathrm{raw}}_{i,t}\mid S_{i,t}
\sim
N\bigl(\mu_t(X_{i,t}),\sigma_{\mathrm{price}}^2\bigr),
\qquad
A_{i,t}:=\operatorname{clip}\bigl(A^{\mathrm{raw}}_{i,t},[p_{\min},p_{\max}]\bigr).
\]
The mean function $\mu_t(X_{i,t})$ is a smooth pricing rule that incorporates
inertia, recent revenue feedback, and seasonality while respecting business
bounds. We again target a local location shift of this pricing rule following Example~\ref{ex:location_shift}:
\[
A^{\mathrm{raw}}_{i,t,\varepsilon}\mid S_{i,t}
\sim
N\bigl(\mu_t(X_{i,t})+\varepsilon,\sigma_{\mathrm{price}}^2\bigr),
\qquad
A_{i,t,\varepsilon}:=\operatorname{clip}\bigl(A^{\mathrm{raw}}_{i,t,\varepsilon},[p_{\min},p_{\max}]\bigr).
\]
Although the simulator contains latent consumer heterogeneity and a latent
reference state, sequential unconfoundedness still holds by construction. The
behavior policy depends only on observed history and fresh policy noise:
Latent heterogeneity affects rewards and future states, but it does not enter
action assignment once we condition on $S_{i,t}$. 

Table~\ref{tab:empirical_study} reports empirical bias and RMSE, and
Figure~\ref{fig:boxplot_closed_form} shows the sampling distributions across
replications. The direct
method remains noticeably biased. SRW reduces bias but is substantially noisier
because it weights raw discounted cumulative reward directly. ASRW performs
best among the feasible estimators, combining the variance reduction from
outcome regression with the bias correction delivered by the learned score, and
it remains very close to ASRW (oracle score).

\section{Discussion}
\label{sec:conclusion}

In this paper, we developed a framework for observational-study causal inference in dynamic designs. We focused on estimation and inference on a class of marginal policy effects, i.e., the pathwise derivative of long-run discounted welfare along a smooth, user-specified convolutional perturbation of the status-quo policy. Working under the flexible causal model of \citet{robins1986new} with sequential unconfoundedness and allowing general history dependence, we showed that this local welfare derivative admits a policy-gradient representation that decomposes the effect into stagewise contributions involving the baseline action-value functions. Building on this decomposition, we proposed a  debiasing strategy based on stagewise Riesz representers, yielding an augmented estimator that combines a regression plug-in with a single per-period correction. The resulting procedure is doubly robust and supports asymptotically valid inference under cross-fitting with flexible machine-learning nuisance estimation. 

Our key finding is that focusing on marginal policy effects avoids the full off-policy evaluation problem, and enables us to derive simple non-parametric estimators whose behavior remains stable even as the decision horizon $T$ grows. In contrast, standard methods for non-parametric off-policy evaluation---including doubly-robust ones---either incur an exponential variance blowup as the horizon $T$ grows \citep{jiang2016doubly,thomas2016data}, or require the ability to observe all state variables that mediate temporal carryovers \citep{kallus2022efficiently}. Thus, in settings such as our pricing simulator that involve a moderately large $T$ and fundamentally unobservable state variables (because the states depend on private consumer beliefs), our approach enables us to make policy-relevant counterfactual inferences (i.e., learn about how long-term welfare might change if we nudge prices up or down) in environments where standard methods for generic off-policy methods are simply not applicable. As such, we argue that marginal policy effects form a statistical target of interest and provide, in some application areas, a rigorous yet pragmatic target for observational-study causal inference in dynamic settings.


\bibliographystyle{plainnat}
\bibliography{ref}

@article{mehrabi2024off,
  title={Off-policy evaluation in {M}arkov decision processes under weak distributional overlap},
  author={Mehrabi, Mohammad and Wager, Stefan},
  journal={arXiv preprint arXiv:2402.08201},
  year={2024}
}

@book{levin2017markov,
  title={Markov chains and mixing times},
  author={Levin, David A and Peres, Yuval},
  volume={107},
  year={2017},
  publisher={American Mathematical Society}
}

@article{robins1994estimation,
  title={Estimation of regression coefficients when some regressors are not always observed},
  author={Robins, James M and Rotnitzky, Andrea and Zhao, Lue Ping},
  journal={Journal of the American statistical Association},
  volume={89},
  number={427},
  pages={846--866},
  year={1994},
  publisher={Taylor \& Francis}
}

@article{bodory2022evaluating,
  title={Evaluating (weighted) dynamic treatment effects by double machine learning},
  author={Bodory, Hugo and Huber, Martin and Laff{\'e}rs, Luk{\'a}{\v{s}}},
  journal={The Econometrics Journal},
  volume={25},
  number={3},
  pages={628--648},
  year={2022},
  publisher={Oxford University Press}
}

@article{bradic2024high,
  title={High-dimensional inference for dynamic treatment effects},
  author={Bradic, Jelena and Ji, Weijie and Zhang, Yuqian},
  journal={The Annals of Statistics},
  volume={52},
  number={2},
  pages={415--440},
  year={2024},
  publisher={Institute of Mathematical Statistics}
}

@article{hirshberg2021augmented,
  title={Augmented minimax linear estimation},
  author={Hirshberg, David A and Wager, Stefan},
  journal={The Annals of Statistics},
  volume={49},
  number={6},
  pages={3206--3227},
  year={2021},
  publisher={Institute of Mathematical Statistics}
}

@article{dudik2014doubly,
  title   = {Doubly robust policy evaluation and optimization},
  author  = {Dud{\'i}k, Miroslav and Erhan, Dumitru and Langford, John and Li, Lihong},
  journal = {Statistical Science},
  volume  = {29},
  number  = {4},
  pages   = {485--511},
  year    = {2014},
  month   = nov,
  publisher = {Institute of Mathematical Statistics},
}

@article{chernozhukov2022automatic,
  title={Automatic debiased machine learning of causal and structural effects},
  author={Chernozhukov, Victor and Newey, Whitney K and Singh, Rahul},
  journal={Econometrica},
  volume={90},
  number={3},
  pages={967--1027},
  year={2022},
  publisher={Wiley Online Library}
}

@article{marbach2001simulation,
  title={Simulation-based optimization of Markov reward processes},
  author={Marbach, Peter and Tsitsiklis, John N},
  journal={IEEE Transactions on Automatic Control},
  volume={46},
  number={2},
  pages={191--209},
  year={2001},
  publisher={IEEE}
}

@article{athey2018approximate,
  title={Approximate residual balancing: debiased inference of average treatment effects in high dimensions},
  author={Athey, Susan and Imbens, Guido W and Wager, Stefan},
  journal={Journal of the Royal Statistical Society Series B: Statistical Methodology},
  volume={80},
  number={4},
  pages={597--623},
  year={2018},
  publisher={Oxford University Press}
}

@article{murphy2003optimal,
  title={Optimal dynamic treatment regimes},
  author={Murphy, Susan A},
  journal={Journal of the Royal Statistical Society Series B: Statistical Methodology},
  volume={65},
  number={2},
  pages={331--355},
  year={2003},
  publisher={Oxford University Press}
}

@article{chakraborty2014dynamic,
  title={Dynamic treatment regimes},
  author={Chakraborty, Bibhas and Murphy, Susan A},
  journal={Annual review of statistics and its application},
  volume={1},
  number={1},
  pages={447--464},
  year={2014},
  publisher={Annual Reviews}
}

@article{robins1986new,
  title={A new approach to causal inference in mortality studies with a sustained exposure period—application to control of the healthy worker survivor effect},
  author={Robins, James},
  journal={Mathematical modelling},
  volume={7},
  number={9-12},
  pages={1393--1512},
  year={1986},
  publisher={Elsevier}
}

@article{ertefaie2018constructing,
  title={Constructing dynamic treatment regimes over indefinite time horizons},
  author={Ertefaie, Ashkan and Strawderman, Robert L},
  journal={Biometrika},
  volume={105},
  number={4},
  pages={963--977},
  year={2018},
  publisher={Oxford University Press}
}

@inproceedings{jiang2016doubly,
  title={Doubly robust off-policy value evaluation for reinforcement learning},
  author={Jiang, Nan and Li, Lihong},
  booktitle={International conference on machine learning},
  pages={652--661},
  year={2016},
  organization={PMLR}
}

@article{kennedy2019nonparametric,
  title={Nonparametric causal effects based on incremental propensity score interventions},
  author={Kennedy, Edward H},
  journal={Journal of the American Statistical Association},
  volume={114},
  number={526},
  pages={645--656},
  year={2019},
  publisher={Taylor \& Francis}
}

@article{bojinov2023design,
  title={Design and analysis of switchback experiments},
  author={Bojinov, Iavor and Simchi-Levi, David and Zhao, Jinglong},
  journal={Management Science},
  volume={69},
  number={7},
  pages={3759--3777},
  year={2023},
  publisher={INFORMS}
}

@article{hu2022switchback,
  title={Switchback experiments under geometric mixing},
  author={Hu, Yuchen and Wager, Stefan},
  journal={arXiv preprint arXiv:2209.00197},
  year={2022}
}

@article{glynn2020adaptive,
  title={Adaptive experimental design with temporal interference: A maximum likelihood approach},
  author={Glynn, Peter W and Johari, Ramesh and Rasouli, Mohammad},
  journal={Advances in Neural Information Processing Systems},
  volume={33},
  pages={15054--15064},
  year={2020}
}

@article{farias2022markovian,
  title={Markovian interference in experiments},
  author={Farias, Vivek and Li, Andrew and Peng, Tianyi and Zheng, Andrew},
  journal={Advances in Neural Information Processing Systems},
  volume={35},
  pages={535--549},
  year={2022}
}

@article{johari2025estimation,
  title={Estimation of Treatment Effects Under Nonstationarity via the Truncated Policy Gradient Estimator},
  author={Johari, Ramesh and Peng, Tianyi and Xing, Wenqian},
  journal={arXiv preprint arXiv:2506.05308},
  year={2025}
}

@book{tsiatis2019dynamic,
  title={Dynamic treatment regimes: Statistical methods for precision medicine},
  author={Tsiatis, Anastasios A and Davidian, Marie and Holloway, Shannon T and Laber, Eric B},
  year={2019},
  publisher={Chapman and Hall/CRC}
}

@article{liao2022batch,
  title   = {Batch policy learning in average reward Markov decision processes},
  author  = {Liao, Peng and Qi, Zhengling and Wan, Runzhe and Klasnja, Predrag and Murphy, Susan A.},
  journal = {The Annals of Statistics},
  volume  = {50},
  number  = {6},
  pages   = {3364--3387},
  year    = {2022}
}

@book{sutton1998reinforcement,
  title={Reinforcement learning: An introduction},
  author={Sutton, Richard S and Barto, Andrew G},
  year={1998},
  publisher={MIT press Cambridge}
}

@article{kallus2022efficiently,
  title={Efficiently breaking the curse of horizon in off-policy evaluation with double reinforcement learning},
  author={Kallus, Nathan and Uehara, Masatoshi},
  journal={Operations Research},
  volume={70},
  number={6},
  pages={3282--3302},
  year={2022},
  publisher={INFORMS}
}

@inproceedings{NIPS1999_464d828b,
 author = {Sutton, Richard S and McAllester, David and Singh, Satinder and Mansour, Yishay},
 booktitle = {Advances in Neural Information Processing Systems},
 editor = {S. Solla and T. Leen and K. M\"{u}ller},
 pages = {},
 publisher = {MIT Press},
 title = {Policy Gradient Methods for Reinforcement Learning with Function Approximation},
 volume = {12},
 year = {1999}
}

@article{williams1992simple,
  title={Simple statistical gradient-following algorithms for connectionist reinforcement learning},
  author={Williams, Ronald J},
  journal={Machine learning},
  volume={8},
  number={3},
  pages={229--256},
  year={1992},
  publisher={Springer}
}

@inproceedings{NIPS2001_4b86abe4,
 author = {Kakade, Sham M},
 booktitle = {Advances in Neural Information Processing Systems},
 editor = {T. Dietterich and S. Becker and Z. Ghahramani},
 pages = {},
 publisher = {MIT Press},
 title = {A Natural Policy Gradient},
 volume = {14},
 year = {2001}
}

@incollection{robins1997causal,
  title={Causal inference from complex longitudinal data},
  author={Robins, James M},
  booktitle={Latent variable modeling and applications to causality},
  pages={69--117},
  year={1997},
  publisher={Springer}
}

@article{wager2021experimenting,
  title={Experimenting in equilibrium},
  author={Wager, Stefan and Xu, Kuang},
  journal={Management Science},
  volume={67},
  number={11},
  pages={6694--6715},
  year={2021},
  publisher={INFORMS}
}

@article{munro2025treatment,
  title={Treatment effects in market equilibrium},
  author={Munro, Evan and Kuang, Xu and Wager, Stefan},
  journal={American Economic Review},
  volume={115},
  number={10},
  pages={3273--3321},
  year={2025},
  publisher={American Economic Association 2014 Broadway, Suite 305, Nashville, TN 37203}
}

@article{chernozhukov2022locally,
  title={Locally robust semiparametric estimation},
  author={Chernozhukov, Victor and Escanciano, Juan Carlos and Ichimura, Hidehiko and Newey, Whitney K and Robins, James M},
  journal={Econometrica},
  volume={90},
  number={4},
  pages={1501--1535},
  year={2022},
  publisher={Wiley Online Library}
}

@article{ghosh2025non,
  title={Non-parametric Causal Inference in Dynamic Thresholding Designs},
  author={Ghosh, Aditya and Wager, Stefan},
  journal={arXiv preprint arXiv:2512.15244},
  year={2025}
}

@misc{wager2024causal,
  title={Causal inference: A statistical learning approach},
  author={Wager, Stefan},
  year={2024},
  publisher={Technical report, Stanford University}
}

@book{hernan2020causal,
  title     = {Causal Inference: What If},
  author    = {Hern{\'a}n, Miguel A. and Robins, James M.},
  publisher = {Chapman \& Hall/CRC},
  address   = {Boca Raton, FL},
  year      = {2020}
}

@article{heckman2005structural,
  title   = {Structural Equations, Treatment Effects, and Econometric Policy Evaluation},
  author  = {Heckman, James J. and Vytlacil, Edward},
  journal = {Econometrica},
  year    = {2005},
  volume  = {73},
  number  = {3},
  pages   = {669--738}
}

@article{carneiro2010evaluating,
  title   = {Evaluating Marginal Policy Changes and the Average Effect of Treatment for Individuals at the Margin},
  author  = {Carneiro, Pedro and Heckman, James J. and Vytlacil, Edward},
  journal = {Econometrica},
  year    = {2010},
  volume  = {78},
  number  = {1},
  pages   = {377--394}
}

@inproceedings{thomas2016data,
  title     = {Data-Efficient Off-Policy Policy Evaluation for Reinforcement Learning},
  author    = {Thomas, Philip S. and Brunskill, Emma},
  booktitle = {Proceedings of the 33rd International Conference on Machine Learning (ICML)},
  series    = {Proceedings of Machine Learning Research},
  volume    = {48},
  pages     = {2139--2148},
  year      = {2016}
}

@article{naimi2021incremental,
  title   = {Incremental Propensity Score Effects for Time-Fixed Exposures},
  author  = {Naimi, Ashley I. and Rudolph, Jacqueline E. and Kennedy, Edward H. and Cartus, Abigail and Kirkpatrick, Sharon I. and Haas, David M. and Simhan, Hyagriv and Bodnar, Lisa M.},
  journal = {Epidemiology},
  year    = {2021},
  volume  = {32},
  number  = {2},
  pages   = {202--208}
}

@article{sarvet2023longitudinal,
  title   = {Longitudinal incremental propensity score interventions for limited resource settings},
  author  = {Sarvet, Aaron L. and Wanis, Kerollos N. and Young, Jessica G. and Hernandez-Alejandro, Roberto and Stensrud, Mats J.},
  journal = {Biometrics},
  volume  = {79},
  number  = {4},
  pages   = {3418--3430},
  year    = {2023},
  doi     = {10.1111/biom.13859}
}

@article{diaz2023lmtp,
  title   = {Nonparametric Causal Effects Based on Longitudinal Modified Treatment Policies},
  author  = {D{\'\i}az, Iv{\'a}n and Williams, Nicholas and Hoffman, Katherine L. and Schenck, Edward J.},
  journal = {Journal of the American Statistical Association},
  year    = {2023},
  volume  = {118},
  number  = {542},
  pages   = {846--857}
}

@article{haneuse2013estimation,
  title   = {Estimation of the Effect of Interventions That Modify the Received Treatment},
  author  = {Haneuse, Sebastian and Rotnitzky, Andrea},
  journal = {Statistics in Medicine},
  year    = {2013},
  volume  = {32},
  number  = {30},
  pages   = {5260--5277}
}

@article{sasaki2023prte,
  title   = {Estimation and Inference for Policy Relevant Treatment Effects},
  author  = {Sasaki, Yuya and Ura, Takuya},
  journal = {Journal of Econometrics},
  year    = {2023},
  volume  = {234},
  number  = {2},
  pages   = {394--450}
}

@article{chernozhukov2018double,
  title   = {Double/debiased machine learning for treatment and structural parameters},
  author  = {Chernozhukov, Victor and Chetverikov, Denis and Demirer, Mert and Duflo, Esther and Hansen, Christian and Newey, Whitney K. and Robins, James},
  journal = {The Econometrics Journal},
  volume  = {21},
  number  = {1},
  pages   = {C1--C68},
  year    = {2018}
}

\appendix 

\section{Proof of technical results}

\subsection{Proof of Theorem~\ref{thm:gPGT}}
\label{app:pg}

\begin{proof}
Throughout, write
\[
\E[\cdot]\equiv \E_0[\cdot],
\qquad
q_t:=q_{t,0},
\qquad
V_t:=V_{t,0},
\qquad
\pi_t:=\pi_{t,0},
\]
where the subscript \(0\) denotes the baseline policy. For a supported policy \(\pi\),
write \(\E_\pi\) for expectation under the joint law induced by \(\pi\) through the
g-formula. Also write
\[
J(\varepsilon)
:=
\E_\varepsilon\!\left[\sum_{k=1}^T\gamma^{k-1}Y_k\right],
\qquad
J(0)
=
\E\!\left[\sum_{k=1}^T\gamma^{k-1}Y_k\right].
\]

We first show that the quantities appearing below are well-defined. By Jensen's
inequality and the moment condition,
\[
\E\{|q_t(S_t,A_t)|\}
=
\E\left[
\left|
\E[\Gamma_t\mid S_t,A_t]
\right|
\right]
\le
\E[|\Gamma_t|]
<\infty.
\]
The forward operator term is absolutely integrable by assumption:
\[
\E\!\left[
\left|
\langle L_t^*\pi_t,q_t\rangle(S_t)
\right|
\right]
\le
\E\!\left[
\int_{\mathcal A_t}
\left|
q_t(S_t,a)(L_t^*\pi_t)(a\mid S_t)
\right|
\,d\lambda(a)
\right]
<\infty.
\]
Similarly,
\[
\E\!\left[
\left|
\langle L_t q_t,\pi_t\rangle(S_t)
\right|
\right]
\le
\E\!\left[
\int_{\mathcal A_t}
\left|
(L_t q_t)(S_t,a)
\right|
\pi_t(a\mid S_t)\,d\lambda(a)
\right]
<\infty.
\]
Finally, by the no-off-support condition, \((L_t^*\pi_t)(a\mid s)=0\) whenever
\(\pi_t(a\mid s)=0\). Hence, with \(H_t\) defined on the baseline support,
\[
(L_t^*\pi_t)(a\mid s)=H_t(s,a)\pi_t(a\mid s),
\]
and therefore
\[
\E\!\left[
|H_t(S_t,A_t)q_t(S_t,A_t)|
\right]
=
\E\!\left[
\int_{\mathcal A_t}
|q_t(S_t,a)(L_t^*\pi_t)(a\mid S_t)|
\,d\lambda(a)
\right]
<\infty.
\]

We now prove the forward representation. Under Assumptions~\ref{assump:consistency}
and~\ref{assump:SU}, the conditional law of \((S_{t+1},Y_t)\) given \((S_t,A_t)\)
is invariant across supported policies. Thus the baseline Bellman identity is
\begin{equation}
q_t(s,a)
=
\E\!\left[
Y_t+\gamma V_{t+1}(S_{t+1})
\,\middle|\,
S_t=s,A_t=a
\right],
\qquad
V_{T+1}\equiv0.
\label{eq:app_pg_bellman}
\end{equation}

Fix any supported policy \(\pi\). Using \eqref{eq:app_pg_bellman},
\[
\begin{aligned}
&\E_\pi\!\left[
\sum_{t=1}^T
\gamma^{t-1}
\{q_t(S_t,A_t)-V_t(S_t)\}
\right]
\\
&\qquad =
\E_\pi\!\left[
\sum_{t=1}^T
\gamma^{t-1}Y_t
+
\sum_{t=1}^T
\gamma^t V_{t+1}(S_{t+1})
-
\sum_{t=1}^T
\gamma^{t-1}V_t(S_t)
\right].
\end{aligned}
\]
The two value-function sums telescope, and \(V_{T+1}\equiv0\), so
\begin{equation}
\E_\pi\!\left[
\sum_{t=1}^T
\gamma^{t-1}
\{q_t(S_t,A_t)-V_t(S_t)\}
\right]
=
\E_\pi\!\left[
\sum_{t=1}^T
\gamma^{t-1}Y_t
\right]
-
\E[V_1(S_1)].
\label{eq:app_pg_perf_diff}
\end{equation}
The law of \(S_1=X_1\) does not depend on the policy, and under the baseline policy
\[
\E[V_1(S_1)]
=
J(0).
\]
Taking \(\pi=\pi_\varepsilon\) in \eqref{eq:app_pg_perf_diff} gives
\[
J(\varepsilon)-J(0)
=
\sum_{t=1}^T
\gamma^{t-1}
\E_\varepsilon\!\left[
q_t(S_t,A_t)-V_t(S_t)
\right].
\]
Conditioning on \(S_t\) under \(P_\varepsilon\), and using
\(V_t(s)=\int_{\mathcal A_t}q_t(s,a)\pi_t(a\mid s)d\lambda(a)\), we obtain
\begin{equation}
J(\varepsilon)-J(0)
=
\sum_{t=1}^T
\gamma^{t-1}
\E_\varepsilon\!\left[
\int_{\mathcal A_t}
q_t(S_t,a)
\{\pi_{t,\varepsilon}(a\mid S_t)-\pi_t(a\mid S_t)\}
\,d\lambda(a)
\right].
\label{eq:app_pg_finite_difference_identity}
\end{equation}

Define
\[
R_{t,\varepsilon}(s)
:=
\int_{\mathcal A_t}
q_t(s,a)
\frac{\pi_{t,\varepsilon}(a\mid s)-\pi_t(a\mid s)}
{\varepsilon}
\,d\lambda(a),
\qquad
R_t(s)
:=
\langle L_t^*\pi_t,q_t\rangle(s).
\]
Dividing \eqref{eq:app_pg_finite_difference_identity} by \(\varepsilon\) gives
\[
\frac{J(\varepsilon)-J(0)}{\varepsilon}
=
\sum_{t=1}^T
\gamma^{t-1}
\E_\varepsilon[
R_{t,\varepsilon}(S_t)
].
\]
The local dominated-convergence condition for the perturbation path implies
\[
\E_\varepsilon[
R_{t,\varepsilon}(S_t)
]
\longrightarrow
\E[
R_t(S_t)
]
=
\E[
\langle L_t^*\pi_t,q_t\rangle(S_t)
]
\qquad
(\varepsilon\downarrow0).
\]
Therefore the right derivative defining the MPE exists and
\begin{equation}
\Theta
=
\sum_{t=1}^T
\gamma^{t-1}
\E[
\langle L_t^*\pi_t,q_t\rangle(S_t)
].
\label{eq:app_pg_forward_operator}
\end{equation}

Next we prove the score form of the forward representation. Since
\((L_t^*\pi_t)(a\mid s)=H_t(s,a)\pi_t(a\mid s)\),
\[
\begin{aligned}
\E[
\langle L_t^*\pi_t,q_t\rangle(S_t)
]
&=
\E\!\left[
\int_{\mathcal A_t}
q_t(S_t,a)(L_t^*\pi_t)(a\mid S_t)\,d\lambda(a)
\right]
\\
&=
\E\!\left[
\int_{\mathcal A_t}
q_t(S_t,a)H_t(S_t,a)\pi_t(a\mid S_t)\,d\lambda(a)
\right]
\\
&=
\E[
H_t(S_t,A_t)q_t(S_t,A_t)
].
\end{aligned}
\]
Because \(H_t(S_t,A_t)\) is \(\sigma(S_t,A_t)\)-measurable and
\(\E[|H_t(S_t,A_t)\Gamma_t|]<\infty\),
\[
\E[
H_t(S_t,A_t)q_t(S_t,A_t)
]
=
\E[
H_t(S_t,A_t)\E[\Gamma_t\mid S_t,A_t]
]
=
\E[
H_t(S_t,A_t)\Gamma_t
].
\]
Combining this identity with \eqref{eq:app_pg_forward_operator} yields
\[
\Theta
=
\sum_{t=1}^T
\gamma^{t-1}
\E[
\langle L_t^*\pi_t,q_t\rangle(S_t)
]
=
\sum_{t=1}^T
\gamma^{t-1}
\E[
H_t(S_t,A_t)\Gamma_t
],
\]
which proves the forward representation.

It remains to prove the backward representation. Fix a stage \(t\), a history
\(s\in\mathcal S_t\), and an admissible function \(q\in\mathcal Q_t\) for which the
following finite-difference limits are justified. Define
\[
F_\varepsilon(s;q)
:=
\int_{\mathcal A_t}
q(s,a)\pi_{t,\varepsilon}(a\mid s)\,d\lambda(a).
\]
By the definition of \(L_t^*\),
\begin{equation}
\left.
\partial_\varepsilon F_\varepsilon(s;q)
\right|_{\varepsilon=0}
=
\int_{\mathcal A_t}
q(s,a)(L_t^*\pi_t)(a\mid s)\,d\lambda(a)
=
\langle L_t^*\pi_t,q\rangle(s).
\label{eq:app_pg_policy_side_derivative}
\end{equation}

On the other hand, by the kernel representation of the convolutional perturbation path,
\[
F_\varepsilon(s;q)
=
\int_{\mathcal A_t}
\left\{
\int_{\mathcal A_t}
q(s,a)\,M_{t,\varepsilon}(a\mid a_0,s)d\lambda(a)
\right\}
\pi_t(a_0\mid s)\,d\lambda(a_0),
\]

Since \(M_{t,0}\) leaves the baseline action unchanged,
\[
F_0(s;q)
=
\int_{\mathcal A_t}
q(s,a_0)\pi_t(a_0\mid s)\,d\lambda(a_0).
\]
Thus
\[
\frac{F_\varepsilon(s;q)-F_0(s;q)}{\varepsilon}
=
\int_{\mathcal A_t}
\frac{
\int_{\mathcal A_t}q(s,a)\,M_{t,\varepsilon}(a\mid a_0,s)d\lambda(a)
-
q(s,a_0)
}
{\varepsilon}
\pi_t(a_0\mid s)\,d\lambda(a_0).
\]
By the definition of \(L_t\) and the kernel-side dominated-convergence condition,
\begin{equation}
\left.
\partial_\varepsilon F_\varepsilon(s;q)
\right|_{\varepsilon=0}
=
\int_{\mathcal A_t}
(L_t q)(s,a_0)\pi_t(a_0\mid s)\,d\lambda(a_0)
=
\langle L_t q,\pi_t\rangle(s).
\label{eq:app_pg_kernel_side_derivative}
\end{equation}
Equating \eqref{eq:app_pg_policy_side_derivative} and
\eqref{eq:app_pg_kernel_side_derivative} gives the adjoint identity
\begin{equation}
\langle L_t^*\pi_t,q\rangle(s)
=
\langle L_t q,\pi_t\rangle(s).
\label{eq:app_pg_adjoint_identity}
\end{equation}

Applying \eqref{eq:app_pg_adjoint_identity} with \(q=q_t\), and then using
\eqref{eq:app_pg_forward_operator}, gives
\[
\Theta
=
\sum_{t=1}^T
\gamma^{t-1}
\E[
\langle L_t q_t,\pi_t\rangle(S_t)
].
\]
Finally, under the baseline law, \(A_t\mid S_t\sim\pi_t(\cdot\mid S_t)\). Hence
\[
\E[
\langle L_t q_t,\pi_t\rangle(S_t)
]
=
\E[
(L_t q_t)(S_t,A_t)
].
\]
Therefore,
\[
\Theta
=
\sum_{t=1}^T
\gamma^{t-1}
\E[
\langle L_t q_t,\pi_t\rangle(S_t)
]
=
\sum_{t=1}^T
\gamma^{t-1}
\E[
(L_t q_t)(S_t,A_t)
],
\]
which proves the backward representation \eqref{eq:gPGT2}.

\end{proof}

\subsection{Proof of Lemma~\ref{Lem:mixed_bias}}
\label{app:mixed_bias}

\begin{proof}
Fix $t\in\{1,\ldots,T\}$, let $q\in\mathcal Q_t$ and
$H$ be any square-integrable function measurable with respect to
$\sigma(S_t,A_t)$.

We first note that the stagewise functional admits the weighted
representation
\begin{equation}
G_t(q)
=
\gamma^{t-1}\E\!\left[
H_t^\star(S_t,A_t)\,q(S_t,A_t)
\right].
\label{eq:stagewise_score_form}
\end{equation}
Indeed, by definition of $H_t^\star$,
\[
(L_t^{\star}\pi_t)(a\mid s)=H_t^\star(s,a)\,\pi_t(a\mid s)
\qquad
\text{on }\{\pi_t(a\mid s)>0\},
\]
so by Theorem \ref{thm:gPGT},
\[
G_t(q)
=
\gamma^{t-1}\E\!\left[
\int_{\mathcal A_t}
q(S_t,a)\,H_t^\star(S_t,a)\,\pi_t(a\mid S_t)\,d\lambda(a)
\right].
\]
Since $A_t\mid S_t\sim\pi_t(\cdot\mid S_t)$ under the baseline law, this is
exactly \eqref{eq:stagewise_score_form}.

Now, by definition of $\widetilde G_t(q,H)$ in
\eqref{eq:one_step_augmented_functional},
\[
\widetilde G_t(q,H)-G_t(q_t^\star)
=
\{G_t(q)-G_t(q_t^\star)\}
+
\gamma^{t-1}\E\!\left[
H(S_t,A_t)\,\{\Gamma_t-q(S_t,A_t)\}
\right].
\]
Using \eqref{eq:stagewise_score_form} for both $q$ and $q_t^\star$ gives
\begin{align*}
\widetilde G_t(q,H)-G_t(q_t^\star)
&=
\gamma^{t-1}\E\!\left[
H_t^\star(S_t,A_t)\,\{q-q_t^\star\}(S_t,A_t)
\right] \\
&\qquad
+
\gamma^{t-1}\E\!\left[
H(S_t,A_t)\,\{\Gamma_t-q(S_t,A_t)\}
\right].
\end{align*}
Add and subtract $q_t^\star(S_t,A_t)$ inside the second expectation:
\[
\Gamma_t-q(S_t,A_t)
=
\{\Gamma_t-q_t^\star(S_t,A_t)\}
-
\{q-q_t^\star\}(S_t,A_t).
\]
Substituting this identity yields
\begin{align*}
\widetilde G_t(q,H)-G_t(q_t^\star)
&=
\gamma^{t-1}\E\!\left[
\{H_t^\star(S_t,A_t)-H(S_t,A_t)\}
\{q-q_t^\star\}(S_t,A_t)
\right] \\
&\qquad
+
\gamma^{t-1}\E\!\left[
H(S_t,A_t)\,\{\Gamma_t-q_t^\star(S_t,A_t)\}
\right].
\end{align*}
Further, iterated expectations give
\[
\E\!\left[
H(S_t,A_t)\,\{\Gamma_t-q_t^\star(S_t,A_t)\}
\right]
=
\E\!\left[
H(S_t,A_t)\,
\E\!\left[
\Gamma_t-q_t^\star(S_t,A_t)
\,\middle|\,
S_t,A_t
\right]
\right]
=
0.
\]
Therefore,
\[
\widetilde G_t(q,H)-G_t(q_t^\star)
=
\gamma^{t-1}\E\!\left[
\{H_t^\star(S_t,A_t)-H(S_t,A_t)\}
\{q-q_t^\star\}(S_t,A_t)
\right],
\]

The two ``in particular'' statements follow immediately.  Last, it remains to verify the Riesz representer claim. By
\eqref{eq:stagewise_score_form}, for every square-integrable $q$,
\[
G_t(q)
=
\E\!\left[
\gamma^{t-1}H_t^\star(S_t,A_t)\,q(S_t,A_t)
\right].
\]
Since $H_t^\star\in L^2$, Cauchy--Schwarz implies
\[
|G_t(q)|
\le
\gamma^{t-1}
\|H_t^\star\|_{L^2}
\|q\|_{L^2},
\]
Hence $\gamma^{t-1}H_t^\star$ is a Riesz representer of $G_t$.

To prove uniqueness, suppose $r\in L^2$ also satisfies
\[
G_t(q)
=
\E\!\left[
r(S_t,A_t)\,q(S_t,A_t)
\right]
\qquad
\text{for every }q\in L^2.
\]
Then
\[
\E\!\left[
\{r(S_t,A_t)-\gamma^{t-1}H_t^\star(S_t,A_t)\}\,q(S_t,A_t)
\right]
=
0
\qquad
\text{for every }q\in L^2.
\]
Taking $q=r-\gamma^{t-1}H_t^\star$
gives
\[
\E\!\left[
\{r(S_t,A_t)-\gamma^{t-1}H_t^\star(S_t,A_t)\}^2
\right]
=
0,
\]
so
\[
r=\gamma^{t-1}H_t^\star
\qquad
\text{a.s}.
\]
Thus $\gamma^{t-1}H_t^\star$ is the unique Riesz representer of $G_t$ in
$L^2$.

Finally, by Theorem \ref{thm:gPGT}, on the admissible class $\mathcal Q_t$ the same
representer corresponds to the operator form
\[
G_t(q)
=
\gamma^{t-1}\E\!\left[
(L_t q)(S_t,A_t)
\right].
\]
\end{proof}

\subsection{Proof of Lemma~\ref{lem:quasi_oracle_clt}}
\label{app:oracle_clt}

\begin{proof}
Define the oracle per-trajectory score by
\[
\psi^\star(Z)
:=
\sum_{t=1}^T
\left[
\gamma^{t-1}(L_t q_t^\star)(S_t,A_t)
+
\gamma^{t-1}H_t^\star(S_t,A_t)\,
\bigl\{\Gamma_t-q_t^\star(S_t,A_t)\bigr\}
\right].
\]
Then, by \eqref{eq:quasi_oracle_estimator},
\[
\widehat\Theta^{\star}
=
\frac{1}{n}\sum_{i=1}^n \psi^\star(Z_i).
\]

We first show unbiasedness. By the definition of the stagewise functional in
\eqref{eq:mpe_stage_decomp},
\[
\E\!\left[
\gamma^{t-1}(L_t q_t^\star)(S_t,A_t)
\right]
=
G_t(q_t^\star).
\]
Next, since $H_t^\star(S_t,A_t)$ is measurable with respect to
$\sigma(S_t,A_t)$ by \eqref{eq:q_reg} and iterated expectations:
\begin{align*}
\E\!\left[
\gamma^{t-1}H_t^\star(S_t,A_t)\,
\bigl\{\Gamma_t-q_t^\star(S_t,A_t)\bigr\}
\right]
&=
\gamma^{t-1}\E\!\left[
H_t^\star(S_t,A_t)\,
\E\!\left[
\Gamma_t-q_t^\star(S_t,A_t)
\,\middle|\,
S_t,A_t
\right]
\right] \\
&= 0.
\end{align*}
Therefore,
\[
\E\!\left[\psi^\star(Z)\right]
=
\sum_{t=1}^T G_t(q_t^\star).
\]
Hence we have 
\[
\E\!\left[\widehat\Theta^{\star}\right]=\Theta.
\]

Now let
\[
\sigma_\star^2
:=
\Var\!\big(\psi^\star(Z)\big).
\]
By assumption, the per-trajectory summand in
\eqref{eq:quasi_oracle_estimator} has finite second moment, so
\[
\E\!\left[(\psi^\star(Z))^2\right]<\infty,
\qquad
\sigma_\star^2<\infty.
\]
Since the trajectories $\{Z_i\}_{i=1}^n$ are i.i.d., the random variables
$\{\psi^\star(Z_i)\}_{i=1}^n$ are also i.i.d. with mean $\Theta$ and variance
$\sigma_\star^2$. Therefore, by the central limit theorem,
\[
\sqrt n\,\bigl(\widehat\Theta^{\star}-\Theta\bigr)
=
\frac{1}{\sqrt n}\sum_{i=1}^n
\Bigl(
\psi^\star(Z_i)-\E[\psi^\star(Z)]
\Bigr)
\ \xrightarrow{d}\
\mathcal N(0,\sigma_\star^2).
\]
\end{proof}

\subsection{Proof of Theorem~\ref{thm:dr_clt}}
\label{app:dr_clt}

\begin{proof}
We prove the result for one scoring fold, conditioning on the training folds used
to estimate the nuisance functions. Conditional on the training folds,
$\{\widehat q_t,\widehat H_t\}_{t=1}^T$ are fixed functions, and the trajectories
in the scoring fold are i.i.d. The same argument applies fold by fold, and since
the number of folds is fixed, averaging over folds does not change the conclusion.
We therefore suppress the fold index and write $\E_n$ for the empirical average
over the scoring fold.

For each stage $t$, write
\[
\|f\|_2^2
:=
\E\!\left[f(S_t,A_t)^2\right],
\]
where the expectation is under the baseline law of $(S_t,A_t)$. Also write
\[
\Delta q_t(s,a)
:=
\widehat q_t(s,a)-q_t^\star(s,a),
\qquad
\Delta H_t(s,a)
:=
\widehat H_t(s,a)-H_t^\star(s,a),
\]
and, when evaluated on the observed trajectory, suppress the arguments:
\[
\Delta q_t:=\Delta q_t(S_t,A_t),
\qquad
\Delta H_t:=\Delta H_t(S_t,A_t).
\]
Finally define
\[
e_t^\star
:=
\Gamma_t-q_t^\star(S_t,A_t).
\]
By \eqref{eq:q_reg},
\[
\E[e_t^\star\mid S_t,A_t]=0.
\]

By boundedness, $|Y_t|\le Y_{\max}$, and hence
\[
|\Gamma_t|
\le
Y_{\max}\sum_{k=0}^{T-1}\gamma^k
=:
\Gamma_{\max}.
\]
Thus
\[
|q_t^\star(S_t,A_t)|\le \Gamma_{\max},
\qquad
|\widehat q_t(S_t,A_t)|\le \Gamma_{\max},
\]
and therefore
\[
|e_t^\star|\le 2\Gamma_{\max},
\qquad
|\Delta q_t|\le 2\Gamma_{\max}.
\]

Now decompose the feasible-oracle difference as
\[
\widehat\Theta-\widehat\Theta^\star
=
A_n+B_n+C_n,
\]
where
\begin{align}
A_n
&:=
\E_n\!\left[
\sum_{t=1}^T
\gamma^{t-1}\Delta H_t e_t^\star
\right],
\label{eq:An_def_revised_op}
\\
B_n
&:=
\E_n\!\left[
\sum_{t=1}^T
\gamma^{t-1}
\left\{
(L_t(\widehat q_t-q_t^\star))(S_t,A_t)
-
H_t^\star(S_t,A_t)\Delta q_t
\right\}
\right],
\label{eq:Bn_def_revised_op}
\\
C_n
&:=
-\E_n\!\left[
\sum_{t=1}^T
\gamma^{t-1}\Delta H_t\Delta q_t
\right].
\label{eq:Cn_def_revised_op}
\end{align}
Indeed,
\[
\begin{aligned}
&\widehat H_t(S_t,A_t)
\{\Gamma_t-\widehat q_t(S_t,A_t)\}
-
H_t^\star(S_t,A_t)
\{\Gamma_t-q_t^\star(S_t,A_t)\}
\\
&\quad =
\{H_t^\star(S_t,A_t)+\Delta H_t\}
\{e_t^\star-\Delta q_t\}
-
H_t^\star(S_t,A_t)e_t^\star
\\
&\quad =
\Delta H_t e_t^\star
-
H_t^\star(S_t,A_t)\Delta q_t
-
\Delta H_t\Delta q_t.
\end{aligned}
\]
Adding the direct plug-in difference
\[
\gamma^{t-1}
\{(L_t\widehat q_t)(S_t,A_t)
-
(L_t q_t^\star)(S_t,A_t)\}
=
\gamma^{t-1}
(L_t(\widehat q_t-q_t^\star))(S_t,A_t)
\]
gives the stated decomposition.

We now show that each term is negligible on the $n^{-1/2}$ scale.

\medskip
\noindent\emph{Step 1: Control of $A_n$.}
Since $\Delta H_t$ is measurable with respect to $\sigma(S_t,A_t)$ and
$\E[e_t^\star\mid S_t,A_t]=0$,
\[
\E[A_n]=0.
\]
Moreover, by i.i.d. sampling over trajectories,
\[
\begin{aligned}
\Var(A_n)
&=
\frac1n
\Var\!\left(
\sum_{t=1}^T
\gamma^{t-1}\Delta H_t e_t^\star
\right)
\\
&\le
\frac1n
\E\!\left[
\left(
\sum_{t=1}^T
\gamma^{t-1}\Delta H_t e_t^\star
\right)^2
\right]
\\
&\le
\frac{T}{n}
\sum_{t=1}^T
\gamma^{2(t-1)}
\E[(\Delta H_t)^2(e_t^\star)^2]
\\
&\le
\frac{4T\Gamma_{\max}^2}{n}
\sum_{t=1}^T
\|\widehat H_t-H_t^\star\|_2^2.
\end{aligned}
\]
By the assumed $L^2$ rate for $\widehat H_t$,
\[
\|\widehat H_t-H_t^\star\|_2^2=o_p(n^{-2\alpha_H})
\]
uniformly over $t$. Since $T$ is fixed and $\alpha_H\ge0$,
\[
n\Var(A_n)=o_p(1).
\]
Therefore, by Chebyshev's inequality,
\[
\sqrt n\,A_n=o_p(1).
\]

\medskip
\noindent\emph{Step 2: Control of $B_n$.}
By the adjoint-score identity established in Theorem~\ref{thm:gPGT}, applied
to $\widehat q_t$ and to $q_t^\star$,
\[
\E\!\left[
(L_t(\widehat q_t-q_t^\star))(S_t,A_t)
\,\middle|\,
S_t
\right]
=
\E\!\left[
H_t^\star(S_t,A_t)\Delta q_t
\,\middle|\,
S_t
\right].
\]
Hence the summand in $B_n$ has conditional mean zero given $S_t$, and so
\[
\E[B_n]=0.
\]
Using i.i.d. sampling over trajectories,
\[
\begin{aligned}
\Var(B_n)
&\le
\frac{T}{n}
\sum_{t=1}^T
\gamma^{2(t-1)}
\E\!\left[
\left\{
(L_t(\widehat q_t-q_t^\star))(S_t,A_t)
-
H_t^\star(S_t,A_t)\Delta q_t
\right\}^2
\right]
\\
&\le
\frac{2T}{n}
\sum_{t=1}^T
\gamma^{2(t-1)}
\left\{
\|L_t(\widehat q_t-q_t^\star)\|_2^2
+
\E[(H_t^\star(S_t,A_t))^2(\Delta q_t)^2]
\right\}.
\end{aligned}
\]
The first term is $o_p(n^{-2\alpha_q})$ by assumption. For the second term,
fix $M>0$. Since $|\Delta q_t|\le 2\Gamma_{\max}$,
\[
\begin{aligned}
\E[(H_t^\star)^2(\Delta q_t)^2]
&\le
M^2\E[(\Delta q_t)^2]
+
(2\Gamma_{\max})^2
\E[(H_t^\star)^2\mathbf 1\{|H_t^\star|>M\}]
\\
&=
M^2\|\widehat q_t-q_t^\star\|_2^2
+
(2\Gamma_{\max})^2
\E[(H_t^\star)^2\mathbf 1\{|H_t^\star|>M\}].
\end{aligned}
\]
Here and below $H_t^\star$ denotes $H_t^\star(S_t,A_t)$ when no confusion can
arise. Since $H_t^\star(S_t,A_t)\in L^2$, the tail term can be made arbitrarily
small by taking $M$ large, while for fixed $M$ the first term is
$o_p(n^{-2\alpha_q})$. Therefore
\[
\E[(H_t^\star(S_t,A_t))^2(\Delta q_t)^2]=o_p(1).
\]
It follows that
\[
n\Var(B_n)=o_p(1),
\]
and hence, by Chebyshev's inequality,
\[
\sqrt n\,B_n=o_p(1).
\]

\medskip
\noindent\emph{Step 3: Control of $C_n$.}
 We first separate its
population bias from its centered empirical fluctuation:
\[
\begin{aligned}
C_n
&=
-\sum_{t=1}^T
\gamma^{t-1}
\E[\Delta H_t\Delta q_t]
\\
&\quad
-\sum_{t=1}^T
\gamma^{t-1}
(\E_n-\E)[\Delta H_t\Delta q_t].
\end{aligned}
\]

For the population part, Cauchy--Schwarz gives
\[
\left|
\sum_{t=1}^T
\gamma^{t-1}
\E[\Delta H_t\Delta q_t]
\right|
\le
\sum_{t=1}^T
\gamma^{t-1}
\|\widehat H_t-H_t^\star\|_2
\|\widehat q_t-q_t^\star\|_2
=
o_p(n^{-(\alpha_H+\alpha_q)}).
\]
Thus, since $\alpha_H+\alpha_q\ge 1/2$,
\[
\sqrt n
\left|
\sum_{t=1}^T
\gamma^{t-1}
\E[\Delta H_t\Delta q_t]
\right|
=o_p(1).
\]

For the centered empirical part, its conditional mean is zero. Its conditional
variance is bounded by
\[
\begin{aligned}
&\Var\!\left(
\sum_{t=1}^T
\gamma^{t-1}
(\E_n-\E)[\Delta H_t\Delta q_t]
\right)
\\
&\qquad\le
\frac{T}{n}
\sum_{t=1}^T
\gamma^{2(t-1)}
\E[(\Delta H_t)^2(\Delta q_t)^2]
\\
&\qquad\le
\frac{4T\Gamma_{\max}^2}{n}
\sum_{t=1}^T
\|\widehat H_t-H_t^\star\|_2^2
=
\frac{o_p(1)}{n}.
\end{aligned}
\]
Therefore, by Chebyshev's inequality,
\[
\sqrt n
\sum_{t=1}^T
\gamma^{t-1}
(\E_n-\E)[\Delta H_t\Delta q_t]
=o_p(1).
\]
Combining the two displays yields
\[
\sqrt n\,C_n=o_p(1).
\]

Putting the three steps together,
\[
\sqrt n(\widehat\Theta-\widehat\Theta^\star)
=
\sqrt n(A_n+B_n+C_n)
=o_p(1).
\]
By Lemma~\ref{lem:quasi_oracle_clt},
\[
\sqrt n(\widehat\Theta^\star-\Theta)
\Rightarrow
\mathcal N(0,\sigma_\star^2).
\]
Slutsky's theorem gives
\[
\sqrt n(\widehat\Theta-\Theta)
\Rightarrow
\mathcal N(0,\sigma_\star^2).
\]
In particular,
\[
\widehat\Theta\to_p\Theta.
\]
\end{proof}

\section{Experimental details}
\label{app:experimental-details}

All simulation code can be found in
\href{https://github.com/IanLai0924/dynamic_MPE.git}{this repository}.

\subsection{Dynamic pricing simulator}
\label{app:simulator-details}

This subsection records the simulator used in
Section~\ref{sec:empirical-pricing}. Demand has three components: seasonality,
a direct effect of current price on purchase probability, and a reference-price
channel through which current prices affect future demand.

\paragraph{Observed history and latent state.}
At each period $t=1,\ldots,T$, the platform posts a price
$A_{i,t}\in[p_{\min},p_{\max}]$ for consumer $i$. The observed state is
\[
X_{i,t}
=
\Bigl(
P^{\mathrm{last}}_{i,t},\
\bar P_{i,t},\
Y^{\mathrm{last}}_{i,t},\
\bar Y_{i,t},\
\sin(2\pi t/T),\
\cos(2\pi t/T)
\Bigr),
\]
and the full observed history is
\[
S_{i,t}
=
(X_{i,1},A_{i,1},Y_{i,1},\ldots,
X_{i,t-1},A_{i,t-1},Y_{i,t-1},X_{i,t}).
\]

Each consumer has time-invariant latent heterogeneity
\[
U_i=(W_i,\alpha_i),
\]
where $W_i$ governs baseline willingness to pay and $\alpha_i$ controls how
quickly the consumer updates an internal reference level. We draw
\[
W_i\sim N(0,\sigma_W^2),
\]
and
\[
\alpha_i
=
\alpha_{\min}
+
(\alpha_{\max}-\alpha_{\min})
\frac{1}{1+\exp(-c_\alpha \xi_i^\alpha)},
\qquad
\xi_i^\alpha\sim N(0,1).
\]
Thus $\alpha_i\in(\alpha_{\min},\alpha_{\max})$, with larger values
corresponding to slower updating of the internal reference price.

In addition to $(W_i,\alpha_i)$, the consumer has two latent time-varying
states:
\[
r^\star_{i,t}
\quad\text{(internal reference price)},
\qquad
u_{i,t}
\quad\text{(persistent taste shock)}.
\]
Initial conditions are
\[
r^\star_{i,1}
=
\operatorname{clip}\!\left(
r_0+\sigma_{r,0}\xi_{i,0}+\pi_W W_i+\sigma_{\mathrm{init}}\zeta_{i,0},
[p_{\min},p_{\max}]
\right),
\qquad
\xi_{i,0},\zeta_{i,0}\stackrel{\mathrm{iid}}{\sim}N(0,1),
\]
and
\[
u_{i,1}
=
0.35W_i+0.25\varepsilon^u_{i,0},
\qquad
\varepsilon^u_{i,0}\sim N(0,1).
\]
The observed summaries are initialized as
\[
P^{\mathrm{last}}_{i,1}=\bar P_{i,1}=r_0,
\qquad
Y^{\mathrm{last}}_{i,1}=\bar Y_{i,1}=0.
\]

\paragraph{Baseline policy and local perturbation.}
Write
\[
\operatorname{clip}(x,[\ell,u])
:=
\min\{u,\max\{\ell,x\}\}.
\]
Conditional on the observed history, the baseline pricing policy is
\[
A^{\mathrm{raw}}_{i,t}\mid S_{i,t}
\sim
N\bigl(\mu_t(X_{i,t}),\sigma_{\mathrm{price}}^2\bigr),
\qquad
A_{i,t}
=
\operatorname{clip}\bigl(A^{\mathrm{raw}}_{i,t},[p_{\min},p_{\max}]\bigr).
\]

The policy mean depends on bounded summaries of recent prices and realized
outcomes. Define
\[
\widetilde Y^{\mathrm{last}}_{i,t}
=
\tanh\!\left(\frac{Y^{\mathrm{last}}_{i,t}-\mu_Y}{s_Y}\right),
\qquad
\widetilde{\bar Y}_{i,t}
=
\tanh\!\left(\frac{\bar Y_{i,t}-\mu_Y}{s_Y}\right),
\]
and
\[
\widetilde G_{i,t}
=
\tanh\!\left(\frac{P^{\mathrm{last}}_{i,t}-\bar P_{i,t}}{2}\right).
\]
Then
\[
\mu_t(X_{i,t})
=
\operatorname{clip}\!\Bigl(
c_0
+c_1P^{\mathrm{last}}_{i,t}
+c_2\bar P_{i,t}
+c_3\widetilde Y^{\mathrm{last}}_{i,t}
+c_4\widetilde{\bar Y}_{i,t}
+c_5\widetilde G_{i,t}
+c_6\sin(2\pi t/T)
+c_7\cos(2\pi t/T),
[p_{\min}+m,p_{\max}-m]
\Bigr),
\]
capturing pricing inertia, feedback from recent outcomes, and seasonality.

The local perturbation is a uniform location shift of the latent Gaussian
pricing rule before clipping:
\[
A^{\mathrm{raw}}_{i,t,\varepsilon}\mid S_{i,t}
\sim
N\bigl(\mu_t(X_{i,t})+\varepsilon,\sigma_{\mathrm{price}}^2\bigr),
\qquad
A_{i,t,\varepsilon}
=
\operatorname{clip}\bigl(A^{\mathrm{raw}}_{i,t,\varepsilon},
[p_{\min},p_{\max}]\bigr).
\]
Because the policy depends only on the observed history, sequential
unconfoundedness holds given $S_{i,t}$.

\paragraph{Demand, outcome, and hidden-state transitions.}
Conditional on $(S_{i,t},A_{i,t})$ and the latent state, the purchase indicator
$B_{i,t}\in\{0,1\}$ is drawn as
\[
B_{i,t}\mid(S_{i,t},A_{i,t},W_i,\alpha_i,r^\star_{i,t},u_{i,t})
\sim
\mathrm{Bernoulli}\!\bigl(\sigma(\ell_{i,t})\bigr),
\qquad
\sigma(x)=\frac{1}{1+e^{-x}}.
\]

To make the reference channel explicit, define
\[
\Delta_{i,t}:=r^\star_{i,t}-A_{i,t}.
\]
The demand index is
\[
\ell_{i,t}
=
\beta_0
+W_i
+u_{i,t}
+\beta_{\mathrm{season}}
\bigl\{
0.9\sin(2\pi t/T)-0.5\cos(2\pi t/T)
\bigr\}
+\beta_{p,i}A_{i,t}
+\beta_{r,i}\Delta_{i,t}
-\beta_{\mathrm{gap},2}\Delta_{i,t}^2
-\beta_{\mathrm{nl}}\sin(\Delta_{i,t}),
\]
where
\[
\beta_{p,i}
=
\beta_p\exp(-0.22W_i),
\]
and
\[
\beta_{r,i}
=
\beta_r
\left\{
1+
1.6\,
\frac{\alpha_i-\bar\alpha}{\alpha_{\max}-\alpha_{\min}}
\right\},
\qquad
\bar\alpha
=
\frac{\alpha_{\min}+\alpha_{\max}}{2}.
\]

Here $W_i$ shifts willingness to pay and price sensitivity through
$\beta_{p,i}$, while $\alpha_i$ enters the reference channel through
$\beta_{r,i}$. A larger gap
$\Delta_{i,t}=r^\star_{i,t}-A_{i,t}$ raises demand through the linear reference
term, while the quadratic and sinusoidal terms allow for nonlinear and
asymmetric reference effects.

The period outcome is realized revenue,
\[
Y_{i,t}=A_{i,t}B_{i,t},
\]
and the discounted cumulative outcome is
\[
\Gamma_{i,t}
=
\sum_{u=t}^T\gamma^{u-t}Y_{i,u}.
\]

The hidden states evolve according to
\begin{align}
r^\star_{i,t+1}
&=
\operatorname{clip}\!\left(
\alpha_i r^\star_{i,t}
+
(1-\alpha_i)A_{i,t}
+
\sigma_r\eta_{i,t},
[p_{\min},p_{\max}]
\right),
\qquad
\eta_{i,t}\stackrel{\mathrm{iid}}{\sim}N(0,1),
\label{eq:rstar_transition_appendix_hidden}
\\
u_{i,t+1}
&=
\pi_u u_{i,t}
+
\lambda_u B_{i,t}
+
\sigma_u\varepsilon^u_{i,t},
\qquad
\varepsilon^u_{i,t}\stackrel{\mathrm{iid}}{\sim}N(0,1).
\label{eq:latent_taste_transition_appendix_hidden}
\end{align}
Thus the current price affects future demand through the update of the latent
reference price $r^\star_{i,t+1}$, while $u_{i,t}$ captures persistent,
purchase-dependent taste variation.

The observed summaries are updated deterministically from realized prices and
outcomes:
\[
P^{\mathrm{last}}_{i,t+1}=A_{i,t},
\qquad
Y^{\mathrm{last}}_{i,t+1}=Y_{i,t},
\]
\[
\bar P_{i,t+1}
=
\frac{t\bar P_{i,t}+A_{i,t}}{t+1},
\qquad
\bar Y_{i,t+1}
=
\frac{t\bar Y_{i,t}+Y_{i,t}}{t+1}.
\]

\paragraph{Initialization and default parameters.}
Unless otherwise stated, we use
\[
T=8,
\qquad
n=5000,
\qquad
\gamma=0.99,
\qquad
(p_{\min},p_{\max})=(1,10),
\qquad
\sigma_{\mathrm{price}}=0.70.
\]
For latent heterogeneity and the hidden-reference process, the default values are
\[
\sigma_W=1.00,
\qquad
(\alpha_{\min},\alpha_{\max},c_\alpha)=(0.72,0.97,1.10),
\]
and
\[
(r_0,\sigma_{r,0},\pi_W,\sigma_{\mathrm{init}},\sigma_r)
=
(5.20,0.85,0.60,0.35,0.12).
\]
For the policy mean, we use
\[
(c_0,c_1,c_2,c_3,c_4,c_5,c_6,c_7)
=
(2.90,0.25,0.18,0.28,0.18,0.15,0.18,-0.10),
\]
with
\[
(\mu_Y,s_Y,m)=(2.70,2.20,0.55).
\]
For demand and latent taste dynamics, we use
\[
(\beta_0,\beta_p,\beta_r,\beta_{\mathrm{gap},2},
\beta_{\mathrm{season}},\beta_{\mathrm{nl}})
=
(2.00,-0.72,0.78,0.18,0.20,0.14),
\]
and
\[
(\pi_u,\sigma_u,\lambda_u)=(0.88,0.22,0.40).
\]

\paragraph{Oracle benchmark and nuisance implementation.}
We approximate the target MPE by a central finite difference along the
mean-shift path. For a small $\delta>0$, let
\[
\widehat J(\varepsilon)
=
\frac{1}{N_{\mathrm{MC}}}
\sum_{i=1}^{N_{\mathrm{MC}}}
\sum_{t=1}^T \gamma^{t-1}Y_{i,t}^{(\varepsilon)},
\]
where the superscript $(\varepsilon)$ denotes a trajectory simulated under the
shifted pricing rule. We then approximate
\[
\Theta
\approx
\frac{\widehat J(+\delta)-\widehat J(-\delta)}{2\delta}.
\]

Because clipping creates atoms at $p_{\min}$ and $p_{\max}$, the observed
baseline policy is mixed discrete--continuous. Let
\[
z_{L,t}(X_{i,t})
=
\frac{p_{\min}-\mu_t(X_{i,t})}{\sigma_{\mathrm{price}}},
\qquad
z_{U,t}(X_{i,t})
=
\frac{p_{\max}-\mu_t(X_{i,t})}{\sigma_{\mathrm{price}}},
\]
and let $\phi$ and $\Phi$ denote the standard normal pdf and cdf. The oracle
score for the clipped Gaussian mean-shift path is
\[
H_t^\star(X_{i,t},A_{i,t})
=
\begin{cases}
\dfrac{A_{i,t}-\mu_t(X_{i,t})}{\sigma_{\mathrm{price}}^2},
& p_{\min}<A_{i,t}<p_{\max},\\[2.5mm]
-\dfrac{\phi(z_{L,t}(X_{i,t}))}
{\sigma_{\mathrm{price}}\Phi(z_{L,t}(X_{i,t}))},
& A_{i,t}=p_{\min},\\[3mm]
\dfrac{\phi(z_{U,t}(X_{i,t}))}
{\sigma_{\mathrm{price}}\{1-\Phi(z_{U,t}(X_{i,t}))\}},
& A_{i,t}=p_{\max}.
\end{cases}
\]
This score is used only for the ASRW (oracle score) benchmark. The feasible
ASRW estimator instead learns $\widehat H_t$ using the variational Riesz
objective from Section~\ref{sec:experiments}.

For the direct plug-in term, we use the interior-derivative representation
\[
(L_t h)(s,a)
=
\mathbf 1\{p_{\min}<a<p_{\max}\}\,\partial_a h(s,a).
\]
Equivalently, for sufficiently smooth $h$,
\[
\E\!\left[
H_t^\star(X_{i,t},A_{i,t})\,h(S_{i,t},A_{i,t})
\right]
=
\E\!\left[
\mathbf 1\{p_{\min}<A_{i,t}<p_{\max}\}\,
\partial_a h(S_{i,t},A_{i,t})
\right].
\]

\begin{figure}[t]
    \centering
    \includegraphics[width=0.8\linewidth]{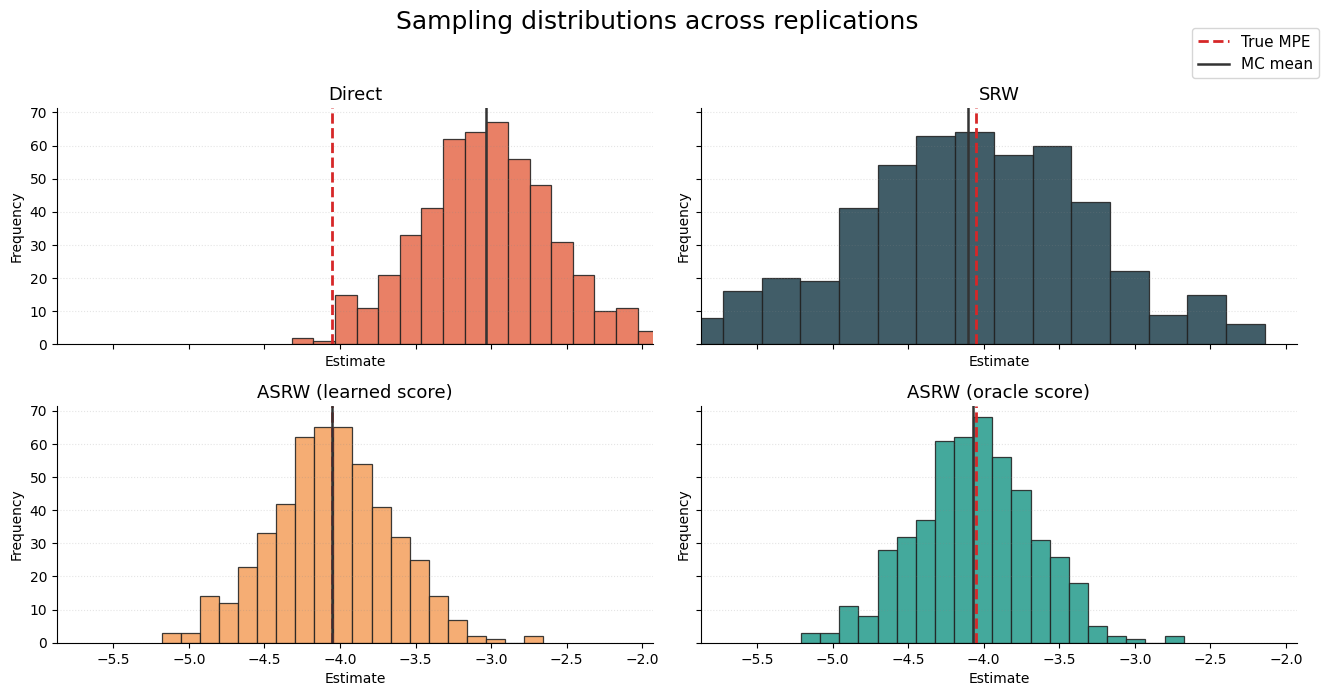}
    \caption{
    Sampling distributions of the four MPE estimators across Monte Carlo replications in the dynamic pricing simulator.
    }
    \label{fig:pricing_histograms}
\end{figure}

Figure~\ref{fig:pricing_histograms} reports the sampling distributions of the four estimators across Monte Carlo replications. The red dashed line marks the
oracle target, and the solid black line marks the Monte Carlo mean of each estimator. The direct plug-in estimator is clearly biased. SRW doesn't exhibit clear bias but suffers from high variance. On the other hand, ASRW with a learned score nearly matches the oracle-score
benchmark.

\end{document}